\documentclass[amsmath,amssymb,showpacs,superscriptaddress]{revtex4}

\usepackage{color}
\usepackage{graphicx}% Include figure files
\usepackage{dcolumn}% Align table columns on decimal point
\usepackage{bm}% bold math
\usepackage{amsmath,amsthm,amssymb}
\usepackage{hyperref}

\newtheorem{thm}{Theorem}[section]

\newcommand{\be}{\begin{equation}}
\newcommand{\ee}{\end{equation}}

\newcommand{\fa}{\Psi}

\newcommand{\fad}{\Psi^\dagger}

\newcommand{\lam}{\lambda}
\newcommand{\ra}{\rangle}
\newcommand{\la}{\langle}
\newcommand{\inti}{\int_{-\infty}^{+\infty}}
\newcommand{\alf}{\alpha}
\newcommand{\bet}{\beta}
\newcommand{\nuf}{\nu}
\newcommand{\Imm}{\Im }
\newcommand{\Ree}{\Re}

%%%%%%%%%%%%%%%%%%%%%%%%%%%%%%%%%%%%%%%%%%%%%%%%%%%%%%%%%%%%%%%%%%%%%%%
%%%%%%%%%%%%%%%%%%%%%%%%%%%%%%%%%%%%%%%%%%%%%%%%%%%%%%%%%%%%%%%%%%%%%%%
%%%%%%%%%%%%%%%%%%%%%%%%%%%%%%%%%%%%%%%%%%%%%%%%%%%%%%%%%%%%%%%%%%%%%%%

\begin{document}

\title{One-Dimensional Impenetrable Anyons in Thermal Equilibrium. III.
Large distance asymptotics of the space correlations.}

\author{Ovidiu I. P\^{a}\c{t}u}

\affiliation{C.N. Yang Institute for Theoretical Physics, State
University of New York at Stony Brook, Stony Brook, NY 11794-3840,
USA }
\affiliation{Institute for Space Sciences,
 Bucharest-M\u{a}gurele, R 077125, Romania}

\author{Vladimir E. Korepin}

\affiliation {C.N. Yang Institute  for Theoretical Physics, State
University of New York at Stony Brook, Stony Brook, NY 11794-3840,
USA }

\author{Dmitri V. Averin}
\affiliation{Department of Physics and Astronomy, State University
of New York at Stony Brook, Stony Brook, NY 11794-3800, USA }
\email[Electronic addresses: ]{ipatu@grad.physics.sunysb.edu;
korepin@max2.physics.sunysb.edu ; dmitri.averin@stonybrook.edu}

\begin{abstract}

Using the determinant representation for the field-field correlation
functions of impenetrable anyons at finite temperature obtained in
\cite{PKA3}, we derive a system of nonlinear partial differential
equations completely characterizing the correlators. The system is
the same as the one for impenetrable bosons but with different
initial conditions. The large-distance asymptotic behavior of the
correlation functions is obtained from the analysis of the
Riemann-Hilbert problem associated with the system of differential
equations. We calculate both the exponential and pre-exponential
factors in the asymptotics of the field-field correlators. The
asymptotics derived in this way agree with those of the free fermions
and impenetrable bosons in the appropriate limits, $\kappa\rightarrow
1$ and $\kappa\rightarrow 0$, of the statistics parameter $\kappa$,
and coincide with the predictions of the conformal field theory at
low temperatures.

\end{abstract}

\pacs{02.30Ik, 05.30.Pr} \maketitle

%%%%%%%%%%%%%%%%%%%%%%%%%%%%%%%%%%%%%%%%%%%%%%%%%%%%%%%%%%%%%%%%%%%
\section{Introduction and Statement of Results }
%%%%%%%%%%%%%%%%%%%%%%%%%%%%%%%%%%%%%%%%%%%%%%%%%%%%%%%%%%%%%%%%%%%

This is the third paper in the series investigating the correlation
functions of one-dimensional impenetrable anyons at finite
temperatures. In the previous two papers \cite{PKA2,PKA3}, we have
obtained the anyonic generalization of Lenard's formula and
determinant representations for the time-, space-, and
temperature-dependent correlators of the anyons. Here, starting from
the obtained determinant representation, we derive the explicit
expressions for the large-distance asymptotics of the anyonic space
correlators at finite temperatures. These expressions demonstrate
the crossover between the fermionic and bosonic behavior of the
correlators when the appropriately-defined statistics parameter
$\kappa \in [0,1]$ changes between the corresponding limits of
bosons, $\kappa =0$, and fermions, $\kappa =1$. The approach we use
to compute the asymptotics of the correlation functions of
impenetrable anyons follows the one developed for impenetrable
bosons \cite{KBI,IIKS,IIK3}. The main technical point of this
approach is the large-distance asymptotic analysis of the
Riemann-Hilbert problem associated with the system of differential
equations which can be derived from the determinant representation.

The model of impenetrable one-dimensional (1D) anyons we consider is
the limit of infinite interaction strength of the Lieb-Liniger gas
of anyons with delta-functional repulsion \cite{SJBR,AGRPS,Kundu}.
The fact that impenetrable particles can not be directly exchanged
in 1D systems implies that introduction of the exchange statistics
for such particles requires an additional convention on the choice
of the sign of the statistical phase associated with exchange of
each pair of particles in the system wavefunctions \cite{AN,PKA}.
The absence of direct exchanges also implies that local
thermodynamic properties of $N$ impenetrable 1D anyons are
independent of statistics in the thermodynamic limit $N \rightarrow
\infty$, and coincide with those of the free fermions -- see, e.g.
\cite{BGH}. In the case of quasiperiodic boundary conditions
(particles moving on a circle), thermodynamic properties are
statistics-dependent at the level of corrections to the leading
large-$N$ asymptotic terms \cite{ZW,BGO,MDG,SSC}. Although the
non-local characteristics like single-particle momentum distribution
(i.e., Fourier transform of the single-particle density matrix)
depend strongly on the statistics parameter $\kappa$ \cite{SSC,C},
they may not be measurable in practice \cite{OP}. For transparent
anyons, i.e. finite particle-particle repulsion, local thermodynamic
properties do depend on $\kappa$ through modification of the
effective coupling constant \cite{BGO,PKA,CM,HZC}, but the model of
anyons without a hard core is not well-defined because of the
essential singularity of the system wavefunction at coincident
particle coordinates. Qualitatively, these properties of the
one-dimensional anyons are preserved in other related anyonic models
with, e.g., motion on the lattice \cite{AOE1,OAE1,BFGLZ,HZC1} or
different interactions \cite{BGK,CS}.

This work presents the exact calculation of the large-distance asymptotic
of the field-field correlator of impenetrable anyons. To state its
main result, we need some notations and definitions. First, we
introduce the function
\be\label{defnuf1} \nuf(\lam,\beta)=\frac{1}{2\pi i}\ln
\left(\frac{e^{\lam^2-\beta}+1 }{e^{\lam^2-\beta} -e^{i\pi\kappa}
}\right) , \ee
where the branch of the logarithm is specified by the requirements
that no branch cut intersects the real axis, and
\be\label{log} \lim_{\lam\rightarrow \infty} \ln
\left(\frac{e^{\lam^2-\beta} +1}{e^{\lam^2-\beta}
-e^{i\pi\kappa}}\right)\rightarrow 0 \, .  \ee
We then define two constants $C(\beta,\kappa)$ and $c(\beta,\kappa)$
as
\be \label{defcbk1} C(\beta,\kappa) = 2i\inti \nuf(\lam,\beta)
d\lam=\frac{1}{\pi}\inti \ln \left(\frac{e^{\lam^2-\beta}+1
}{e^{\lam^2-\beta}-e^{i\pi\kappa}}\right)d\lam\, , \ee
and
\be \label{con}  c(\beta,\kappa)=\inti\frac{\partial_\lam
\nuf(\lam,\beta) \nuf(\mu,\beta) -\nuf(\lam,\beta)
\partial_\mu\nuf(\mu,\beta)}{2 (\lam-\mu)}\ d\lam d\mu\, . \ee
In terms of these quantities, our main result for the two leading
terms in the large-distance asymptotic of the field-field correlator
of impenetrable anyons is:
\be\label{asymptotic}
\la\fad(x_1)\fa(x_2)\ra_T=e^{-x_{12}\sqrt{T}C(h/T,\kappa)/2}
e^{c(h/T, \kappa)} \left(c_0e^{ix_{12}\sqrt{T}\lam_0} +c_{-1}
e^{ix_{12}\sqrt{T} \lam_{-1}}\right)\, . \ee
Here $T$ is temperature, $h$ - chemical potential, $x_{12}\equiv
(x_1-x_2) \rightarrow +\infty $, and the constants $\lam_j$ with
$j=0, -1$ are complex numbers which depend on $\beta \equiv h/T$ and
statistics $\kappa$ as
\be
\lam_j=\left(\beta+\sqrt{\beta^2+\pi^2[\kappa+2j]^2}\right)^{1/2}/
\sqrt{2} +i\left(-\beta+\sqrt{\beta^2+\pi^2[\kappa+2j]^2}
\right)^{1/2} /\sqrt{2}\, . \ee
Two other constants $c_j$ which give the amplitudes of the
asymptotic terms in (\ref{asymptotic}) are given by
\be
c_j=i\frac{e^{i\pi\kappa}\sqrt{T}}{2}\frac{\alf^{-2}(\lam_j)}{\lam_j}\,
, \ee
where the function $\alf(\lam)$ is defined by Eq.~(\ref{defalf}).
The second term in the asymptotic expansion (\ref{asymptotic}) is
always smaller than the first one, but becomes relevant when the
statistics parameter approaches the fermionic value $\kappa = 1$.
The precise conditions on $\kappa$ under which this term does not
exceed the accuracy of our calculation are stated in Section
\ref{largexap}, see Eqs.~(\ref{min}) and (\ref{max}). More
generally, all results of this work are obtained, strictly speaking,
only for $\kappa\in (0,1]$. However, as shown in Section
\ref{bosoniclimit}, in the limit $\kappa\rightarrow 0$ our results
reproduce completely the exponential behavior of the asymptotics for
impenetrable bosons ($\kappa=0$) \cite{IIK2,IIK3}, and the
pre-exponential factors in the case of negative chemical potential
$h$.

An important role in obtaining the asymptotics (\ref{asymptotic}) is
played by a set of auxiliary potentials $B_{+-}$ and $B_{++}$
introduced below in Section \ref{DRAP} which are related to the
integral operators in the representation of the correlation
functions. The integral operator which enters in the Fredholm
determinant representation for the correlator is of a special kind
called ``integrable integral operator". Integral operators of this
type play an important role in the study of correlation functions of
integrable models and random matrices
\cite{WMTB,JMMS,IIKS,IIKV,HI,KBI}. The special form of the kernels of
such operators allowed us to obtain a system of partial differential
equations for the auxiliary potentials, and for the logarithm of the
Fredholm determinant, which characterize completely the correlators
at any distance. The system of equations is the same as the one
obtained for impenetrable bosons \cite{IIK1,IIK3,KBI} but with
different initial conditions. The short-distance and low-density
behavior of the field-field correlator can be extracted from the
initial conditions. At zero temperature, the nonlinear differential
equation for the logarithm of the determinant becomes the Painlev\'e
V differential equation obtained by  Jimbo, Miwa, M\^ori and Sato
for impenetrable bosons \cite{JMMS}, but again, with different
initial conditions. For finite system, the Painlev\' e VI
differential equation characterizing the field-field correlator was
derived by Santachiara and Calabrese in \cite{SC}. The
short-distance behavior of the 2-point correlation function obtained
in \cite{SC} agrees in the thermodynamic limit with our results at
zero temperature. The large-distance behavior of the correlation
functions is extracted from the solution of a matrix Riemann-Hilbert
problem associated with the system of differential equations for the
auxiliary potentials. The technique used is similar to the one
developed for impenetrable bosons \cite{IIK1,IIK3,KBI}. For the
computation of the pre-exponential factors in the asymptotics of the
correlators we have adopted the method employed by Kitanine,
Kozlowski, Maillet, Slavnov and Terras in their study of the
generalized sine-kernel \cite{KKMST}. This method was first
used by Cheianov and Zvonarev in the computation of the asymptotic
behavior  of correlation functions of impenetrable electrons \cite{CZ,CZ1}.
Some of the results of this work were presented in \cite{PKA4}.

The plan of the paper is as follows. In Section \ref{DRAP}, we
briefly review the determinant representation for correlators
obtained in our previous papers \cite{PKA2,PKA3}, and introduce the
auxiliary potentials which play a central  role in the subsequent
analysis of the correlation functions. In Section \ref{DESC}, we
derive the system of partial nonlinear differential equations
characterizing the auxiliary potentials and the correlators. Section
\ref{MRH} introduces the matrix Riemann-Hilbert problem associated
with the obtained system of equations. The large-distance analysis
of this problem is performed in Section \ref{LDAA}. We conclude in
Section \ref{A} by comparing our main result with the known limiting
cases: the low-temperature limit characterized by the conformal
behavior of the correlators, and bosonic and fermionic limits of the
statistics parameter. Technical details of the calculations are
presented in the appendices.

%%%%%%%%%%%%%%%%%%%%%%%%%%%%%%%%%%%%%%%%%%%%%%%%%%%%%%%%%%%%%%%%%%%%%%%%
\section{Determinant Representation and Auxiliary Potentials}\label{DRAP}
%%%%%%%%%%%%%%%%%%%%%%%%%%%%%%%%%%%%%%%%%%%%%%%%%%%%%%%%%%%%%%%%%%%%%%%%

The model of one-dimensional anyons we consider in this work is
characterized by the hamiltonian $H$ given in the second quantized
form as
\be\label{hama} H=\int dx  \left( [\partial_x \fad(x)][\partial_x
\fa(x)]+c\fad(x)\fad(x)\fa(x)\fa(x)-h\fad(x)\fa(x) \right) . \ee
Here $c$ is the coupling constant which is assumed to be large,
$c\rightarrow +\infty$, so that the anyons are impenetrable. The
fields in the Hamiltonian (\ref{hama}) obey the anyonic commutation
relations
\[ \fa(x_1)\fad(x_2)=e^{-i\pi\kappa\epsilon(x_1-x_2)} \fad(x_2)
\fa(x_1)+ \delta(x_1-x_2)\, , \]
\[ \fad(x_1)\fad(x_2)=e^{i\pi\kappa\epsilon(x_1-x_2)} \fad(x_2)
\fad(x_1)\, , \]
where $\epsilon(x)=x/|x|,\ \epsilon(0)=0.$ The commutation relations
become bosonic for $\kappa=0$, and fermionic for $\kappa=1$. In this
work, we are interested in the large-distance behavior of  the
field-field correlator of the impenetrable anyons (\ref{hama}). The
static, i.e. equal-time, correlator at finite temperatures is
defined by the standard relation:
\be \la\fad(x_1)\fa(x_2)\ra_T\equiv\frac{\mbox{Tr} \left(
e^{-H/T}\fad(x_1)\fa(x_2)\right)}{\mbox{Tr} e^{-H/T}}\, . \ee
In the previous paper \cite{PKA2} of the series, we have obtained
the following expressions for this correlator
\be \label{i7} \la\fad(x_1)\fa(x_2)\ra_{T}= \frac{1}{2\pi}\mbox{Tr
}\left[(1-\gamma\hat K_T)^{-1}\hat A_T^+\right] \det(1-\gamma\hat
K_T)|_{\gamma=(1+e^{+i\pi\kappa})/\pi}\, ,\ \ \  x_1>x_2\, , \ee
and
\be\label{i9} \la\fad(x_1)\fa(x_2)\ra_{T}= \frac{1}{2\pi}\mbox{Tr
}\left[(1-\gamma\hat K_T)^{-1}\hat A_T^-\right] \det(1-\gamma\hat
K_T)|_{\gamma=(1+e^{-i\pi\kappa})/\pi}\, ,\ \ \  x_1<x_2\, .
\ee
In these expressions, $\hat K_T$ and $\hat A_T^\pm$ are the integral
operators acting on the entire real axis with the kernels
\[ K_T(\lam,\mu)=\sqrt{\vartheta(\lam)}\frac{\sin x_{12}(\lam-\mu)}{
\lam -\mu}\sqrt{\vartheta(\mu)}\, ,\ \ \  \ \
 A_T^\pm(\lam,\mu)=\sqrt{\vartheta(\lam)}e^{\mp ix_{12}(\lam+\mu)}
\sqrt{\vartheta(\mu)}\, , \]
where
\be
\vartheta(\lam)\equiv\vartheta(\lam,T,h)=\frac{1}{1+e^{(\lam^2-h)/T}}\,
, \label{fw} \ee
is the Fermi distribution function, and $\mbox{Tr}[ f(x,y)] \equiv
\int f(x,x)\ dx$. Equations (\ref{i7}) and (\ref{i9}) remain valid
at zero temperature, but in this case, the integral operators act on
the interval $[-q,q]$ with $q=\sqrt{h}$, and have kernels
\[ K(\lam,\mu)=\frac{\sin x_{12}(\lam-\mu)}{\lam -\mu}\, ,\ \ \ \
A^\pm(\lam,\mu)=e^{\mp ix_{12}(\lam+\mu)}\, . \]

Equations (\ref{i7}) and (\ref{i9}) show that the anyonic
corrrelator $\la\fad(x_1)\fa(x_2)\ra_{T}$ depends on the sign of
$x_1-x_2$. However, since the values (\ref{i7}) and (\ref{i9}) for
positive and negative $x_1-x_2$ are related directly via complex
conjugation, one can focus only on one range, e.g., the correlator
(\ref{i7}). This correlator depends on four variables: the
coordinate difference $x_1-x_2>0$, temperature $T$, chemical
potential $h$, and statistics parameter $\kappa$. As we will see
bellow, introducing the variables that are rescaled by temperature:
the distance $x$ and the chemical potential $\beta$ defined as
\[x=\frac{1}{2}(x_1-x_2)\sqrt{T}\, , \ \ \ \beta=\frac{h}{T} \, ,
\]
and similarly changing the spectral parameter, $\lam\rightarrow \lam
\sqrt{T}$,  one makes explicit dependence on temperature very
simple:
\be\label{fc} \la\fad(x_1)\fa(x_2)\ra_{T}= \frac{\sqrt{T} }{2\pi
\gamma} g(x,\beta,\gamma)|_{\gamma=(1+e^{i\pi\kappa})/\pi}\, . \ee
The function $g(x,\beta,\gamma)$ here will be defined in the next
section.

%%%%%%%%%%%%%%%%%%%%%%%%%%%%%%%%%%%%%%%%%%%%%%%%%%%%%%%%%%%%%%%%%

\subsection{Auxiliary Potentials}

The Fredholm integral operator $\hat K_T$ appearing in the
expressions (\ref{i7}) and (\ref{i9}) for the field correlators
belongs to a special class of ``integrable'' operators
\cite{IIKS,KBI,HI}. This means that in terms of the ``plane waves"
introduced as
\be \label{de} e_{\pm}(\lam)=\sqrt{\vartheta(\lam)}e^{\pm i\lam x}
\ee
the kernel of $\hat K_T$ can be written in the following form which
generalizes the simple factorizable kernels:
\be \label{ad1} K_T (\lam,\mu)= \frac{e_+(\lam)e_-(\mu)-e_-
(\lam)e_+(\mu)}{ 2i (\lam -\mu)} \, . \ee
This is a particular case of the more general situation studied in
\cite{IIKS,HI} (see also Chap XIV of \cite{KBI}). An important
feature of this class of integrable operators is the fact that the
kernel $R_T(\lam,\mu)$ of the resolvent operator $\hat R_T$, defined
by the relation
\be \label{defrt} \hat R_T=(1-\gamma \hat K_T)^{-1}\hat K_T\, ,\ \
\mbox{i.e.},\  (1-\gamma\hat K_T)(1+\gamma\hat R_T)=1\, , \ee
has the same form \cite{IIKS,KBI}. Indeed, Eq.~(\ref{defrt}) means
that the resolvent kernel $R_T(\lam,\mu)$ solves the integral
equation
\[ R_T(\lam,\mu) -\gamma\int_{- \infty}^{+\infty}K_T(\lam,\nu)
R_T(\nu,\mu)\ d\nu=K_T(\lam,\mu)\, . \]
Introducing then the functions $f_\pm(\lam)$ which are the solutions
of the similar integral equations
\be\label{df}
f_\pm(\lam)-\gamma\int_{-\infty}^{+\infty}K_T(\lam,\mu) f_\pm(\mu)
d\mu=e_\pm(\lam)\, , \ee
one can show (for a proof, see Chap XIV of \cite{KBI}) that the
resolvent kernel can be written in the same form (\ref{ad1}) as
$K_T$:
\be\label{dr}
R_T(\lam,\mu)=\frac{f_+(\lam)f_-(\mu)-f_-(\lam)f_+(\mu)}{2i( \lam -
\mu)}\, . \ee

An important role in our asymptotic analysis of the correlator
(\ref{i7}) is played by the auxiliary potentials $B_{lm}$ which are
defined by
\be\label{defblm} B_{lm}(x,\beta,\kappa)\equiv
\gamma\int_{-\infty}^{+\infty} e_l(\lam)f_m(\lam)\ d\lam\, , \ \ \ \
l=\pm,\ m=\pm\, , \ee
where $\gamma=(1+e^{i\pi\kappa})/\pi$. A new feature of the
auxiliary potentials $B_{lm}$ in the case of anyons in comparison to
bosons, when $\gamma=2/\pi$, is that they are now complex. The
potentials, however, still satisfy the relations $B_{+-}(x, \beta,
\kappa) =B_{-+}(x,\beta,\kappa)$ and $B_{++}(x,\beta,\kappa) =
B_{--}(x,\beta,\kappa)$ as in the bosonic case. Indeed, as one can
see from the definitions (\ref{defblm}) and (\ref{df})
\begin{eqnarray*}
B_{+-}& = &\gamma\int_{-\infty}^{+\infty} e_+(\lam)f_-(\lam)\ d
\lam= \gamma\int_{-\infty}^{+\infty} e_+(\lam)\int_{-\infty}^{+
\infty}(1-\gamma \hat K_T)^{-1}(\lam,\mu)e_-(\mu)\ d\mu\  d\lam
\nonumber\\ &=&\gamma\int_{-\infty}^{+\infty} e_-(\lam)f_+(\lam)\
d\lam\equiv B_{-+}\, , \end{eqnarray*}
where in the last line we have used the fact that the kernel $K_T$
is symmetric, $K_T(\lam,\mu)=K_T(\mu,\lam)$. In order to prove the
second assertion we start with the integral equation (\ref{df}) for
$f_+(-\lam)$
\[ f_+(-\lam)-\gamma\int_{-\infty}^{+\infty}K_T(-\lam,\mu)f_+(\mu)
d\mu=\sqrt{\vartheta(\lam)}e^{-i\lam x}\, , \]
which can be rewritten as
\[ f_+(-\lam)-\gamma\int_{-\infty}^{+\infty}K_T(-\lam,-\mu)f_+(-\mu)
d\mu =\sqrt{\vartheta(\lam)}e^{-i\lam x}\, . \]
Using the relation $K_T(-\lam,-\mu)=K_T(\lam,\mu)$, we see then that
$f_+(-\lam)=f_-(\lam)$. This gives
\begin{eqnarray*}
B_{++}& = &\gamma\int_{-\infty}^{+\infty} e_+(\lam)f_+(\lam)\ d\lam
=\gamma\int_{-\infty}^{+\infty} e_+(-\lam)f_+(-\lam)\ d\lam \nonumber
\\ &=&\gamma\int_{-\infty}^{+\infty} e_-(\lam)f_-(\lam)\ d\lam =
B_{--} \, . \end{eqnarray*}
Finally, the definition of the kernel $A_T^+$ shows directly that
(\ref{i7}) can be rewritten as
\[ \la\fad(x_1)\fa(x_2)\ra_{T}= \frac{\sqrt{T}}{2\pi}\det(1-\gamma
\hat K_T)|_{\gamma=(1+e^{+i\pi\kappa})/\pi} \int_{-\infty}^{+\infty}
f_-(\lam)e_-(\lam)\ d\lam \, , \]
in terms of the function $f_-(\lam)$ (\ref{df}). This shows that the
function $g(x,\beta,\gamma)$ introduced in Eq.~(\ref{fc}) is given
by
\be\label{dg} g(x,\beta,\gamma)= B_{++}(x,\beta,\gamma) \det(1-
\gamma\hat K_T)|_{\gamma=(1+e^{+i\pi\kappa})/\pi} \, . \ee
%

%%%%%%%%%%%%%%%%%%%%%%%%%%%%%%%%%%%%%%%%%%%%%%%%%%%%%%%%%%%%%%%%%%%%%%%
\section{Differential Equations for Static Correlators}\label{DESC}
%%%%%%%%%%%%%%%%%%%%%%%%%%%%%%%%%%%%%%%%%%%%%%%%%%%%%%%%%%%%%%%%%%%%%%%

In general, it is difficult to obtain directly differential
equations for the entire correlator (\ref{i7}). The strategy we
pursue is to obtain first nonlinear partial differential equations
for the potentials $B_{++},B_{+-}$, and then show that the function
$\sigma(x,\beta, \gamma) = \ln \det(1-\gamma \hat K_T)$ can be
expressed in terms of $B_{++}$ and $B_{+-}$. We start by look for
the two operators \textsf{L} and \textsf{M} depending on $B_{++}$
and $B_{+-}$ such that we have
\begin{eqnarray}\label{Lax}
\partial_x F(\lam)&=&\textsf{L}\  F(\lam)\, ,\nonumber\\
(2\lam\partial_\beta+\partial_\lam) F(\lam)&=&\textsf{M}\ F(\lam)\, ,
\end{eqnarray}
where $F(\lam)$ is the two-component vector function
\be \label{F} F(\lam)=\left(\begin{array}{c} f_+(\lam)\\f_-(\lam)
\end{array} \right)\, . \ee
The compatibility condition for the system (\ref{Lax})
\[ (2\lam\partial_\beta+\partial_\lam)\textsf{L}-\partial_x
\textsf{M}+ [\textsf{L},\textsf{M}]=0\, , \]
gives then a system of partial differential equations in $x$ and
$\beta$. It should be noted that since the Fredholm determinant in
Eq.~(\ref{i7}) differs from the one appearing in the similar
representation for impenetrable bosons only in the value of
$\gamma$, the calculations follow closely those for impenetrable
bosons \cite{IIK1,KBI}. The main difference is that $\gamma$ for
anyons is complex, making the auxiliary potentials $B_{++}, B_{+-}$
complex and not real as for the bosons.

%%%%%%%%%%%%%%%%%%%%%%%%%%%%%%%%%%%%%%%%%%%%%%%%%%%%%%%%%%%%%%%%%%%%

\subsection{The \textsf{L} operator}

We start from the integral equations for the functions $f_\pm(\lam)$
\[ f_\pm(\lam)-\gamma\int_{-\infty}^{+\infty}K_T(\lam,\mu)f_\pm(\mu)
d\mu=e_\pm(\lam)\, ,\ \ \gamma=(1+e^{i\pi\kappa})/\pi. \]
Differentiation with respect to $x$ gives
\[ \left[(1-\gamma\hat K_T)\partial_x f_\pm\right](\lam)-\gamma
\left[ \partial_x\hat K_T f_\pm\right](\lam)=\pm i\lam e_\pm(\lam)
\, . \]
Acting on both sides with $(1-\gamma \hat K_T)^{-1}$  and using the relations
$(1-\gamma\hat K_T)^{-1}=(1+\gamma\hat R_T),$ $\left[(1-\gamma \hat K_T)^{-1}
e_\pm\right](\lam)=f_\pm(\lam),$ $\partial_x
K_T(\lam,\mu)=(e_+(\lam)e_-(\mu)+e_-(\lam)e_+(\mu))/2$, and the
special factorization for the resolvent kernel (\ref{dr})  we obtain
\begin{eqnarray*}
\partial_x f_+(\lam)&=&i\lam f_+(\lam)+B_{++}f_-(\lam)\, ,\\
\partial_x f_-(\lam)&=&-i\lam f_+(\lam)+B_{--}f_-(\lam)\, .
\end{eqnarray*}
These equations mean that the \textsf{L} operator has the form
\be\label{defL}
\textsf{L}=i\lam\sigma_3+B_{++}\sigma_1\, ,
\ee
where we have used the fact that $B_{++}=B_{--}$, and $\sigma_i$ are
the Pauli matrices
\[ \sigma_1=\left(\begin{array}{lr}0&1\\1&0 \end{array}\right), \ \
\ \sigma_2=\left(\begin{array}{lr}0&-i\\i&0 \end{array}\right), \ \
\ \sigma_3=\left(\begin{array}{lr}1&0\\0&-1 \end{array}\right). \]
%

%%%%%%%%%%%%%%%%%%%%%%%%%%%%%%%%%%%%%%%%%%%%%%%%%%%%%%%%%%%%%%%%%%%%%

\subsection{The \textsf{M} operator}

The derivation of the \textsf{M} operator is more complicated. In
this case, we have to rely heavily on the following property of the
Fermi distribution function $\vartheta(\lam)$:
\be \label{ad5} (2\lam\partial_\beta+\partial_\lam)\vartheta(\lam)
=0 \, . \ee
This property is essential, since in the derivation of the
\textsf{M} operator, one encounters terms which contain the
resolvent $R_T(\lam,\mu)$ but still can not be reduced to the form
$(\lam-\mu)R_T(\lam,\mu)$ (as in the previous section) which produces
``disentangled" terms (i.e., products of the one-dimensional
projectors). This is also why the differential operator associated
with \textsf{M} is $(2\lam\partial_\beta+\partial_\lam)$ and not
simply $\partial_\beta$ as one could have expected.

The computations are more involved but similar with the ones for impenetrable
bosons and we refer the reader to  Chap. XIV of \cite{KBI} for the derivation.
The final result is
\begin{eqnarray*}
(2\lam\partial_\beta+\partial_\lam)f_+(\lam)&=&ixf_+(\lam)-if_+(\lam)
\partial_\beta B_{-+}+if_-(\lam)\partial_\beta B_{++} \, ,\\
(2\lam\partial_\beta+\partial_\lam)f_-(\lam)&=&-ixf_+(\lam)-if_+(\lam)
\partial_\beta B_{--}+if_-(\lam)\partial_\beta B_{+-}\, .
\end{eqnarray*}
Taking into account that $B_{-+}=B_{+-},\ B_{++}=B_{--}$ this
finally gives the \textsf{M} operator as
\be\label{defM} \textsf{M}=ix\sigma_3-i\partial_\beta
B_{+-}\sigma_3- \partial_\beta B_{++} \sigma_2\, . \ee
%

%%%%%%%%%%%%%%%%%%%%%%%%%%%%%%%%%%%%%%%%%%%%%%%%%%%%%%%%%%%%%%%%%%%%

\subsection{Differential equations for the potentials} \label{sigma}

The results obtained in the previous sections allow us to state the
following theorem:

\begin{thm}\label{thm1}
For all $\gamma=(1+e^{i\pi\kappa})/\pi$ with $\kappa\in[0,1)$, the
potentials $B_{+-}(x,\beta,\gamma)\, , B_{++}(x,\beta,\gamma)$
satisfy the following system of partial differential equations
\begin{eqnarray}
\partial_\beta B_{+-}&=&x+\frac{1}{2}\frac{\partial_x\partial_\beta
B_{++}}{B_{++}}\, ,\label{debpm}\\ \partial_x B_{+-}&=&B_{++}^2\,
,\label{debpp} \end{eqnarray}
with the initial conditions (at fixed $\beta$)
\be \label{inx} B_{++}(x,\beta,\gamma)=B_{+-}(x,\beta,\gamma) =
\gamma d(\beta)+ [\gamma d(\beta)]^2x+O(x^2)\, ,\ \ x\rightarrow 0\,
, \ee
where $d(\beta)=\inti\vartheta(\lam)d\lam$ and
\be\label{inbeta} B_{++}(x,\beta,\gamma)=B_{+-}(x,\beta,\gamma)=0\,
,\ \ \beta\rightarrow -\infty\, . \ee
The  potential $B_{++}(x,\beta,\gamma)$ satisfies the nonlinear
equation
\be\label{debpp2}
\partial_\beta B_{++}^2=1+\frac{1}{2}\frac{\partial}{\partial_x}
\left(\frac{\partial_x\partial_\beta B_{++}}{B_{++}}\right)\, ,
\ee
with the same initial conditions.
\end{thm}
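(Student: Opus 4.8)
The plan is to read the differential equations directly off the zero-curvature (compatibility) condition for the Lax pair $(\textsf{L},\textsf{M})$ constructed above, and then to pin down the remaining data from the $x\rightarrow0$ and $\beta\rightarrow-\infty$ limits of the potentials. Substituting $\textsf{L}=i\lam\sigma_3+B_{++}\sigma_1$ and $\textsf{M}=ix\sigma_3-i\partial_\beta B_{+-}\sigma_3-\partial_\beta B_{++}\sigma_2$ into $(2\lam\partial_\beta+\partial_\lam)\textsf{L}-\partial_x\textsf{M}+[\textsf{L},\textsf{M}]=0$, I would expand the commutator with $[\sigma_1,\sigma_2]=2i\sigma_3$ and its cyclic partners, and collect the coefficients of $\sigma_1,\sigma_2,\sigma_3$ separately. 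The $\sigma_1$-component should vanish identically: the term $2\lam\partial_\beta B_{++}$ produced by $(2\lam\partial_\beta+\partial_\lam)\textsf{L}$ is exactly cancelled by the $\sigma_1$-part of $[\textsf{L},\textsf{M}]$, and this cancellation is the consistency check that the ansatz for $\textsf{M}$ is admissible.

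The $\sigma_2$-component should read $\partial_x\partial_\beta B_{++}+2B_{++}(x-\partial_\beta B_{+-})=0$, which rearranges at once into (\ref{debpm}). The $\sigma_3$-component should read $\partial_x\partial_\beta B_{+-}=2B_{++}\partial_\beta B_{++}=\partial_\beta(B_{++}^2)$; crucially this is only the $\beta$-derivative of (\ref{debpp}), so to recover (\ref{debpp}) itself I would integrate in $\beta$ and fix the integration constant using the vanishing of both potentials as $\beta\rightarrow-\infty$. The nonlinear equation (\ref{debpp2}) then follows purely by equating mixed partials: differentiating (\ref{debpm}) in $x$ gives $\partial_x\partial_\beta B_{+-}=1+\tfrac12\partial_x(\partial_x\partial_\beta B_{++}/B_{++})$, while differentiating (\ref{debpp}) in $\beta$ gives $\partial_\beta\partial_x B_{+-}=\partial_\beta B_{++}^2$, and comparing the two yields (\ref{debpp2}) directly.

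For the initial data, the condition (\ref{inbeta}) follows because $\vartheta(\lam)=[1+e^{(\lam^2-h)/T}]^{-1}\rightarrow0$ pointwise as $h/T\rightarrow-\infty$, so the kernel $K_T$ and the sources $e_\pm=\sqrt{\vartheta}e^{\pm i\lam x}$ vanish, giving $f_\pm\rightarrow e_\pm\rightarrow0$ and hence $B_{lm}=\gamma\inti e_l f_m\,d\lam\rightarrow0$. For the small-$x$ expansion (\ref{inx}) I would use that $K_T$ vanishes at $x=0$ (since $\sin x_{12}(\lam-\mu)\to0$), so $f_\pm=e_\pm=\sqrt{\vartheta}$ and $B_{++}(0)=B_{+-}(0)=\gamma\inti\vartheta\,d\lam=\gamma d(\beta)$. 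The linear coefficient of $B_{+-}$ is then immediate from (\ref{debpp}): $\partial_x B_{+-}|_{x=0}=B_{++}^2|_{x=0}=[\gamma d(\beta)]^2$. For $B_{++}$ I would differentiate its defining integral using $\partial_x e_+=i\lam e_+$ and the $\textsf{L}$-relation $\partial_x f_+=i\lam f_++B_{++}f_-$ to obtain $\partial_x B_{++}=2i\gamma\inti\lam e_+ f_+\,d\lam+B_{++}B_{+-}$; at $x=0$ the integrand $\lam e_+ f_+=\lam\vartheta$ is odd and integrates to zero, leaving $\partial_x B_{++}|_{x=0}=[\gamma d(\beta)]^2$ as required.

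The step I expect to be the main obstacle is the $\sigma_3$-component: the zero-curvature condition only delivers $\partial_\beta(\partial_x B_{+-}-B_{++}^2)=0$, i.e.\ that $\partial_x B_{+-}-B_{++}^2$ is independent of $\beta$, so the entire equation (\ref{debpp}) rests on justifying that this $\beta$-independent quantity vanishes. This forces me to establish the $\beta\rightarrow-\infty$ behavior (\ref{inbeta}) first and use it to fix the integration constant uniformly in $x$, so the ordering of the argument — boundary limit before the $\beta$-integration — is essential rather than cosmetic.
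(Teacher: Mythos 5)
Your proposal is correct and follows essentially the same route as the paper: the two equations are read off the $\sigma_2$- and $\sigma_3$-components of the zero-curvature condition for $(\textsf{L},\textsf{M})$, the $\sigma_3$-component is integrated in $\beta$ using the $\beta\rightarrow-\infty$ vanishing to recover (\ref{debpp}), and (\ref{debpp2}) follows by cross-differentiation; you correctly isolate the integration-constant issue as the one point where the boundary data is logically indispensable. The only (harmless) deviation is that you obtain the $x\rightarrow 0$ data by evaluating $f_\pm$ at $x=0$ and differentiating the definition of $B_{++}$ via the \textsf{L}-relation, whereas the paper derives (\ref{inx}) from a systematic small-$x$ iteration of the integral equation in Appendix \ref{sda}; both give $\gamma d(\beta)+[\gamma d(\beta)]^2x+O(x^2)$.
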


\begin{proof}
In the previous sections we have shown that the two-component vector
function $F(\lam)$ satisfies the following differential equations
\begin{eqnarray*}
\partial_x F(\lam)&=&\textsf{L}\  F(\lam)\, ,\\
(2\lam\partial_\beta+\partial_\lam) F(\lam)&=&\textsf{M}\ F(\lam)\, .
\end{eqnarray*}
where \textsf{L,M} are given by (\ref{defL}) and (\ref{defM}). The
compatibility condition for these equation is
\be\label{compcond}
[\partial_x-\textsf{L},(2\lam\partial_\beta+\partial_\lam)-
\textsf{M} ] = (2\lam\partial_\beta+\partial_\lam)\textsf{L}-
\partial_x \textsf{M} + [\textsf{L},\textsf{M}]=0\, . \ee
Differentiating Eqs.~(\ref{defL}) and (\ref{defM}), we get
\[ (2\lam\partial_\beta+\partial_\lam)\textsf{L}=2\lam
(\partial_\beta B_{++})\sigma_1+i\sigma_3\, , \]
\[ \partial_x\textsf{M}=i\sigma_3-i(\partial_x\partial_\beta B_{+-}
)\sigma_3-(\partial_x\partial_\beta B_{++})\sigma_2\, .  \]
Using the standard relation for the Pauli matrices
\[ [\sigma_i,\sigma_j]=2i\epsilon_{ijk}\sigma_k\, , \]
we also have
\[ [\textsf{L},\textsf{M}]=-2\lam(\partial_\beta B_{++})\sigma_1+
\left[2x B_{++}-2B_{++}(\partial_\beta B_{+-})\right]\sigma_2
-2iB_{++}(\partial_\beta B_{++})\sigma_3\, . \]
Plugging these results into Eq.~(\ref{compcond}) we obtain a matrix
with only two distinct elements. The condition that this matrix
vanishes identically gives the following relations
\[ 2B_{++}(\partial_\beta B_{+-})-2x B_{++}+\partial_x
\partial_\beta B_{++}=0\, , \]
and
\[ \partial_x\partial_\beta B_{+-}-\partial_\beta B_{++}^2=0\, . \]
The first relation is just Eq.~(\ref{debpm}), while the second is
equivalent to Eq.~(\ref{debpp}) due to (\ref{inbeta}). The equation
(\ref{debpp2}) is obtained by differentiating (\ref{debpm}) with
respect to $x$, (\ref{debpp}) with respect to $\beta$, and then
equating the right-hand sides of the resulting relations. The
asymptotic behavior at short distances (\ref{inx}) and at low
density (\ref{inbeta}), which give the initial conditions, are
obtained in Appendices (\ref{sda}) and (\ref{lde}).
\end{proof}

%%%%%%%%%%%%%%%%%%%%%%%%%%%%%%%%%%%%%%%%%%%%%%%%%%%%%%%%%%%%%%%%%

\subsection{Differential equations for $\sigma$}

\begin{thm}\label{thm2}
For any $\gamma=(1+e^{i\pi\kappa})/\pi$ with $\kappa\in[0,1)$, the
partial derivatives of $\sigma(x,\beta,\gamma) \equiv \ln
\det(1-\gamma \hat K_T)$ with respect to $x$ and $\beta$ are given
by
\begin{eqnarray}
\partial_x\sigma&=&-B_{+-}\, , \ \ \ \partial_x^2\sigma=-B_{++}^2
\, ,\label{derxs}\\ \partial_\beta\sigma &=&-x\partial_\beta
B_{+-}+\frac{1}{2}(\partial_\beta B_{+-})^2-\frac{1}{2}
(\partial_\beta B_{++})^2\, .\label{derbs}
\end{eqnarray}
Furthermore,  for all $\gamma$s, the function $\sigma(x,\beta,
\gamma)$ satisfies the following nonlinear partial differential
equation
\be\label{mainde} (\partial_\beta\partial_x^2 \sigma)^2+4
(\partial_x^2 \sigma)[ 2x\partial_\beta\partial_x\sigma+
(\partial_\beta\partial_x\sigma)^2-2\partial_\beta\sigma]=0 \ee
with the initial conditions
\be\label{incs} \sigma=-\gamma d(\beta)x-[\gamma
d(\beta)]^2\frac{x^2}{2}+O(x^3)\, , \ \ x\rightarrow 0\, ;\ \
\sigma=0\, ,\ \  \beta\rightarrow -\infty\, , \ee
where, as before, $d(\beta)=\inti\vartheta(\lam)d\lam$.
\end{thm}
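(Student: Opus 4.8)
The plan is to differentiate $\sigma=\ln\det(1-\gamma\hat K_T)$ using the standard identity $\partial_t\sigma=-\gamma\,\mbox{Tr}[(1-\gamma\hat K_T)^{-1}\partial_t\hat K_T]$ for $t\in\{x,\beta\}$, and to reduce every trace to the auxiliary potentials $B_{+-},B_{++}$ by exploiting the integrable structure. For the $x$-derivative this is immediate: the kernel derivative $\partial_x K_T(\lam,\mu)=\tfrac12\bigl(e_+(\lam)e_-(\mu)+e_-(\lam)e_+(\mu)\bigr)$, already recorded in the derivation of \textsf{L}, is a rank-two factorized object, so $\mbox{Tr}[(1-\gamma\hat K_T)^{-1}\partial_x\hat K_T]$ collapses, via $(1-\gamma\hat K_T)^{-1}e_\pm=f_\pm$, into $\tfrac1{2\gamma}(B_{-+}+B_{+-})$. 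Using the symmetry $B_{-+}=B_{+-}$ established in Section \ref{DRAP}, this gives $\partial_x\sigma=-B_{+-}$, and differentiating once more in $x$ together with Eq.~(\ref{debpp}) of Theorem \ref{thm1} yields $\partial_x^2\sigma=-\partial_x B_{+-}=-B_{++}^2$. This disposes of Eq.~(\ref{derxs}) essentially for free.

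The genuinely delicate statement is the $\beta$-derivative formula (\ref{derbs}), whose quadratic structure in $\partial_\beta B_{+-}$ and $\partial_\beta B_{++}$ cannot come from a single linear trace; I expect this to be the main obstacle. The direct route mirrors the derivation of the \textsf{M} operator: one writes $\partial_\beta\sigma=-\gamma\,\mbox{Tr}[(1-\gamma\hat K_T)^{-1}\partial_\beta\hat K_T]$, trades $\partial_\beta$ for the clean operator $D=2\lam\partial_\beta+\partial_\lam$ using the key property (\ref{ad5}) (which gives $De_\pm=\pm ix\,e_\pm$), and integrates by parts in $\lam$; the boundary terms together with the terms bilinear in the resolvent generate the bilinear combination in (\ref{derbs}), exactly as in Chap.~XIV of \cite{KBI}. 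The heaviest part there is tracking the non-disentangled terms and the boundary contributions. A lighter alternative, which I would actually carry out, is to observe that $\partial_x\sigma=-B_{+-}$ already fixes $\partial_x\partial_\beta\sigma=-\partial_\beta B_{+-}$, and then to verify that the claimed right-hand side $G$ of (\ref{derbs}) obeys the same first-order equation in $x$: writing $u=\partial_\beta B_{+-}$, $v=\partial_\beta B_{++}$, $b=B_{++}$, Theorem \ref{thm1} gives $\partial_x u=2bv$ and $\partial_x v=2b(u-x)$, whence $\partial_x G=-u-x\partial_x u+u\partial_x u-v\partial_x v=-u$. Since $\sigma\to0$ as $x\to0$ forces $\partial_\beta\sigma|_{x=0}=0$, while $G|_{x=0}=0$ because $B_{+-}$ and $B_{++}$ coincide through order $x$ by (\ref{inx}), uniqueness of the $x$-integration identifies $\partial_\beta\sigma$ with $G$.

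With the three derivative formulas in hand, the nonlinear equation (\ref{mainde}) is purely algebraic: substituting $\partial_x^2\sigma=-b^2$, $\partial_\beta\partial_x^2\sigma=-2bv$, $\partial_\beta\partial_x\sigma=-u$ and $\partial_\beta\sigma=-xu+\tfrac12u^2-\tfrac12v^2$, the bracket in (\ref{mainde}) reduces to $v^2$ and the whole left-hand side collapses to $4b^2v^2-4b^2v^2=0$. Finally, the initial conditions (\ref{incs}) follow by integrating $\partial_x\sigma=-B_{+-}$ in $x$ with the short-distance expansion (\ref{inx}) of Theorem \ref{thm1}, fixing the constant of integration through $\sigma|_{x=0}=\ln\det(1-\gamma\hat K_T)|_{x=0}=0$ since the kernel vanishes at $x=0$; the low-density condition $\sigma\to0$ as $\beta\to-\infty$ follows identically from $\vartheta\to0$ and (\ref{inbeta}).
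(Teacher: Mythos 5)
Your proposal is correct, and for the one genuinely hard ingredient it takes a different route from the paper. The treatment of $\partial_x\sigma=-B_{+-}$ (via the factorized kernel $\partial_x K_T$ and $(1-\gamma\hat K_T)^{-1}e_\pm=f_\pm$), the algebraic substitution yielding (\ref{mainde}), and the initial conditions all coincide with the paper's proof. The difference is in (\ref{derbs}): the paper obtains it by the ``cumbersome'' direct trace computation of $\partial_\beta\sigma$, involving the operator $2\lam\partial_\beta+\partial_\lam$ and the tracking of non-disentangled and boundary terms, and in fact does not reproduce that calculation but defers it to Chap.~XIV of \cite{KBI}. You instead bypass it entirely: since $\partial_x(\partial_\beta\sigma)=\partial_\beta(\partial_x\sigma)=-\partial_\beta B_{+-}$, it suffices to check that the candidate right-hand side $G=-x u+\tfrac12u^2-\tfrac12v^2$ (with $u=\partial_\beta B_{+-}$, $v=\partial_\beta B_{++}$) satisfies $\partial_x G=-u$, which follows from the already-proved relations $\partial_x u=2B_{++}v$ and $\partial_x v=2B_{++}(u-x)$ of Theorem \ref{thm1}, and that $G$ and $\partial_\beta\sigma$ both vanish at $x=0$ (the latter since $\sigma|_{x=0}\equiv0$, the former since $u|_{x=0}=v|_{x=0}=\gamma d'(\beta)$ by (\ref{inx})). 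Your verifications are arithmetically correct, and the argument is not circular because Theorem \ref{thm1} was established independently from the Lax-pair compatibility. What your route buys is a self-contained, essentially algebraic derivation of (\ref{derbs}) that avoids the resolvent manipulations; what it costs is that it leans on the uniqueness of the $x$-integration and hence implicitly on smoothness of $B_{\pm\pm}$ in $(x,\beta)$ and on $B_{++}\neq0$ where Theorem \ref{thm1} is invoked, assumptions the direct trace computation does not need in quite the same form.
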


\begin{proof}
We start with the simpler case of derivative with respect to $x$.
Making use of the following standard representation for the Fredholm
determinant of an arbitrary integral operator
\be\label{repr} \det(1-\gamma \hat K_T)=\exp\left(- \sum_{n=1
}^\infty \frac{\gamma^n}{n}\mbox{ Tr }K_T^n\right) , \ee
where $\mbox{Tr } K_T=\int K_T(\lam,\lam),\ \ \mbox{Tr } K_T^2= \int
K_T(\lam,\mu)K_T(\mu,\lam)\  d\lam d\mu $, and so on,  we obtain
\[ \partial_x\sigma=-\gamma \mbox{ Tr }\left[ (1-\gamma \hat K_T)^{
-1} \partial_x \hat K_T\right]\, . \]
From the relations $\left[(1-\gamma \hat K_T)^{-1} e_\pm\right]
(\lam )=f_\pm(\lam)$ and $\partial_x K_T(\lam,\mu)=(e_+(\lam)
e_-(\mu) +e_-(\lam)e_+(\mu))/2$ that were already used before, we
have
\begin{eqnarray*}
-\gamma \mbox{ Tr }\left[(1-\gamma \hat K_T)^{-1}\partial_x \hat
K_T\right]&=& -\frac{\gamma}{2}\inti f_+(\lam)e_-(\lam)d\lam-
\frac{\gamma}{2}\inti f_-(\lam)e_+(\lam)d\lam\, ,\nonumber\\ &=&
-\frac{1}{2}(B_{-+} +B_{+-})\, , \end{eqnarray*}
which due to $B_{-+}=B_{+-}$ means that
\[ \partial_x \sigma=-B_{+-}\, . \]
proving the first part of Eq.~(\ref{derxs}). The second part is
obtained differentiating once more with respect to $x$ and using
Eq.~(\ref{debpp}).

The derivation of the differential equation (\ref{derbs}) involving the  $\beta$ derivative
is more cumbersome. Again, the computations are similar
with the impenetrable bosons case and we refer the reader to Chap. XIV of \cite{KBI}.

The last differential equation (\ref{mainde}) is obtained by
replacing the potentials with $B_{+-}=-\partial_x \sigma$ and
$B_{++}=(-\partial_x^2\sigma)^{1/2}$ in the R.H.S. of
Eq.~(\ref{derbs}). It is also straightforward to see that $K_T (
\lam,\mu) \rightarrow 0$ when $x\rightarrow 0$ or $\beta \rightarrow
-\infty$, which means that $\sigma= \ln \det (1-\gamma \hat K_T)=0$
in these limits. The first condition in (\ref{incs}) follows then
from Eq.~(\ref{inx}). This concludes the proof.
\end{proof}

Finally, we end this section with a simple but important
observation. Integrating equation $\partial_x\sigma=-B_{+-}$ with
initial conditions (\ref{incs}), we express the function
$g(x,\beta,\gamma)$ (\ref{dg}) and, therefore, the field correlator
(\ref{fc}), directly in terms of the potentials $B_{++},B_{+-}$:
\be\label{fullexp} g(x,\beta,\gamma) =B_{++}(x,\beta,\gamma) e^{-
\int_0^x B_{+-} (y,\beta,\gamma)dy} \, . \ee
This expression will be used in our subsequent analysis of the field
correlator.

%%%%%%%%%%%%%%%%%%%%%%%%%%%%%%%%%%%%%%%%%%%%%%%%%%%%%%%%%%%%%%%%%%%%%

\subsection{The zero-temperature limit}

In the zero temperature limit, the field-field correlator (\ref{fc})
depends only on three variables, the distance $(x_1-x_2)>0$, the
Fermi momentum  $q=\sqrt{h}$ (or, equivalently, the chemical
potential $h$), and the statistics parameter $\kappa$. The distance
and momentum dependence can be encoded by one variable:
\[ \xi=\frac{(x_1-x_2)}{2}q=x\beta^{1/2}\, , \]
where now the rescaled variables are $x=(x_1-x_2)/2$ and $\beta=h$.
At $T=0$, the integral operator $\hat K$ acts on the interval
$[-q,q]$ and has the kernel $K(\lam,\mu)=\sin x(\lam-\mu) /(\lam
-\mu)$. The logarithm of the determinant is
\[ \tilde\sigma_0(\xi,\gamma)=\ln \det(1-\gamma\hat K)|_{\gamma=(1+
e^{i\pi\kappa})/\pi}\, .  \]
The partial differential equation (\ref{mainde}) characterizing
$\sigma(x,\beta,\gamma)$ becomes then an ordinary differential
equation in $\xi$. Introducing the new function
\be
\sigma_0=\xi(\tilde\sigma_0)'
\ee
where prime denotes the derivative with respect to $\xi$, we
transform Eq.~(\ref{mainde}) into
\be\label{JMO}
(\xi\sigma_0'')^2+4(\xi\sigma_0'-\sigma_0)[4\xi\sigma_0-4\sigma_0 +
(\sigma_0')^2]=0 \ee
with the boundary conditions
\be\label{inc0} \sigma_0=-2\gamma\xi-4\gamma^2\xi^2+O(\xi^3)\, , \ \
\gamma=(1+e^{i \pi\kappa})/\pi\, . \ee
The ordinary differential equation (\ref{JMO}) is the same Painlev\'e
V  equation obtained by Jimbo, Miwa, M\^ori and  Sato in their
celebrated work on the one-particle reduced density matrix
(field-field correlator) of impenetrable bosons \cite{JMMS}. The only
difference we find in the anyonic case is the boundary conditions
(\ref{inc0}) which, unlike the differential equation itself, depend
on the statistics parameter. In a certain sense, this result could be
expected, since it was noticed already by Jimbo et {\it al.} that the
same equation (\ref{JMO}), but with different boundary conditions,
characterizes the density matrix of both the impenetrable bosons and
the free fermions. Similar situation was also noted by Forrester,
Frankel, Garoni and Witte \cite{FFGW} in their study of systems of
finite number of particles with periodic boundary conditions. In this
case, the reduced density matrix satisfies a Painlev\'e VI
differential equation. Their work was recently extended to
impenetrable anyons by Santachiara and Calabrese \cite{SC}. At $T=0$,
our results agree with those of Santachiara and Calabrese. Indeed,
Eq.~(\ref{sdasympt}) for the first terms of the short-distance
expansion of the field-field correlator, which reflects the boundary
conditions (\ref{inc0}), reduces at $T=0$ to:
\be
\la\fad(x_1)\fa(x_2)\ra_0=D_0\left(1-\frac{\pi^2}{6}D_0^2(x_1-x_2)^2
+\gamma\frac{\pi^3}{18}D_0^3(x_1-x_2)^3\right)+O\left((x_1-x_2)^4
\right)\, , \ \ \ \gamma=(1+e^{i\pi\kappa})/\pi \, , \ee
where $D_0=q/\pi$ is the density at $T=0$. This result coincides with
the expansion obtained in \cite{SC} -- see Eq.~(47) of that work and
note that the definition of the statistical parameter $\kappa$ there
differs in sign from ours.

%%%%%%%%%%%%%%%%%%%%%%%%%%%%%%%%%%%%%%%%%%%%%%%%%%%%%%%%%%%%%%%%%%
\section{The Matrix Riemann-Hilbert problem for the field-field
corelator} \label{MRH}
%%%%%%%%%%%%%%%%%%%%%%%%%%%%%%%%%%%%%%%%%%%%%%%%%%%%%%%%%%%%%%%%%%

The most difficult step in the analysis of the field-field
correlation function is the calculation of its large-distance
asymptotics. As we have seen in the previous sections, the
correlation function is characterized by a completely integrable
system of nonlinear partial differential equations -- see Thm.
\ref{thm1}, (\ref{fc}) and (\ref{fullexp}). A powerful method of
analyzing these differential equations is the matrix Riemann-Hilbert
problem (RHP) formalism \cite{TF}. The solution of the associated matrix
RHP will allow us to obtain the large-distance asymptotics of the
potentials $B_{+-},B_{++}$, and therefore, the large-distance
asymptotics of the correlator.

We start by formulating the matrix Riemann-Hilbert problem relevant
to the case of potentials $B_{+-},B_{++}$. In what follows,
$\kappa\in(0,1]$. As we will see below, in this case we need to find
a $2\times 2$ matrix-valued function $\chi(\lam)$, nonsingular for
all $\lam \in\mathbb{C}$ and analytic separately in the upper and
lower half-plane, which is equal to the unit matrix at $\lam=\infty$
\[ \chi(\infty)=I\, , \]
and has the boundary values on the real axis which satisfy the
condition
\begin{eqnarray}\label{RHP}
\chi_-(\lam)&=&\chi_+(\lam)G(\lam)\, ,\ \ \ \ \chi_{\pm}(\lam)=
\lim_{ \epsilon\rightarrow 0^+}\chi(\lam\pm i\epsilon)\ \
\lam\in\mathbb{R}\, . \end{eqnarray}
The matrix $G(\lam)$ is called the conjugation matrix associated
with the RHP and is defined only for real $\lam$. In our case, it
has the form
\be \label{conj} G(\lam)=\left(\begin{array}{cc} 1+\pi\gamma
e_+(\lam)e_-(\lam)\, , & -\pi\gamma e_+^2(\lam) \\ \pi\gamma
e_-^2(\lam)\, , & 1-\pi\gamma e_+(\lam) e_-(\lam) \end{array}\right)
\ee
where all notations are the same as before, e.g.,
$e_\pm(\lam)=\sqrt{\vartheta(\lam)}e^{\pm i\lam x}$, etc. This means
that both $G(\lam)$ and $\chi(\lam)$ depend also on $x,\beta$ and
$\kappa$, but this dependence will be suppressed in our notations.

%%%%%%%%%%%%%%%%%%%%%%%%%%%%%%%%%%%%%%%%%%%%%%%%%%%%%%%%%%%%%%%%%%%%%

\subsection{Relation to the auxiliary potentials}\label{connect}

In this section, we show how the auxiliary potentials $B_{+-},
B_{++}$ are obtained from the solution of the RHP (\ref{RHP}). The
normalization condition $\chi(\infty)=I$ means that the solution of
the RHP has the following expansion for large $\lambda$
\be\label{lelam} \chi(\lam)=I+\frac{\Psi_1}{\lam}+ O\left( \frac{
1}{\lam^2}\right) , \ee
where $\Psi_1(x,\beta)$ is a $2\times 2$ matrix which depends only
on $\beta$ and $x$. Then, the following symmetry of the conjugation
matrix
\be G(\lam)=\sigma_1 G^{-1}(-\lam)\sigma_1\, , \ee
that can be checked explicitly from the definition (\ref{conj}),
determines the structure of $\Psi_1$. This matrix can be written as
\be\label{psi1} \Psi_1=\frac{1}{2i}\left(\begin{array}{cc}
B_{+-}&-B_{++}\\ B_{++}&-B_{+-} \end{array}\right)\, , \ee
where the factor $1/2i$ is introduced for convenience, and at the
moment we do not know yet that $B_{++}$ and $B_{+-}$ in
Eq.~(\ref{psi1}) are the auxiliary potentials (\ref{defblm}). To
prove this, we use the following formulation of the RHP. One can
show directly (see, e.g, Chap.~XV of \cite{KBI}) that the matrix RHP
is equivalent to the system of singular integral equations:
\be \label{inteq} \chi_+(\lam)=I+\frac{1}{2\pi i}\inti\frac{ \chi_+
(\mu)[I-G(\mu)]}{\mu-\lam-i0}d\mu\, , \ \ \lam\in\mathbb{R}\, . \ee
If we define
\be\label{deftild} \tilde\chi(\lam)=\chi_+(\lam)E(\lam)\, , \ee
where $E(\lam)$ is the triangular matrix
\[ E(\lam)=\left(\begin{array}{lr}
                         1&e_+(\lam)\\
                         0&e_-(\lam)
               \end{array}\right) , \]
the system of integral equations (\ref{inteq}) can be rewritten as
\be\label{int11} \tilde\chi(\lam)=E(\lam)+\frac{1}{2\pi
i}\inti\frac{\tilde\chi(\mu) \tilde G(\mu,\lam)}{\mu-\lam-i0}d\mu\,
,\ \ \lam\in\mathbb{R}\, , \ee
with
\[ \tilde G(\lam,\mu)=E^{-1}(\mu)[I-G(\mu)]E(\lam)=
\left(\begin{array}{cc}
         0&0\\
         -\pi\gamma e_-(\mu) & \pi\gamma(e_+(\mu)e_-(\lam)-e_+(\lam)
         e_-(\mu)) \end{array}\right) . \]
For the matrix elements, Eq.~(\ref{int11}) gives the integral
equations:
\begin{eqnarray*}
\tilde\chi_{12}&=&e_+(\lam)+\gamma\inti K_T(\lam,\mu)
\tilde\chi_{12} (\mu)d\mu\, ,\\
\tilde\chi_{22}&=&e_-(\lam)+\gamma\inti K_T(\lam,\mu)
\tilde\chi_{22}(\mu)d\mu\, , \end{eqnarray*}
with the same kernel (\ref{ad1}) we considered before:
\[ K_T(\lam,\mu)=\frac{e_+(\lam)e_-(\mu)-e_-(\lam)e_+(\mu)}{
2i(\lam-\mu)}\, , \]
and
\begin{eqnarray*}
\tilde\chi_{11}(\lam)&=&1-\frac{\gamma}{2i}\inti
\frac{\tilde\chi_{12}(\mu)e_-(\mu)}{\mu-\lam-i0}d\mu\, ,\\
\tilde\chi_{21}(\lam)&=&-\frac{\gamma}{2i}\inti
\frac{\tilde\chi_{22}(\mu)e_+(\mu)}{\mu-\lam-i0}d\mu\, .
\end{eqnarray*}
In the first two equations, the nonsingular character of the kernel
$K_T$ makes it possible to neglect the infinitesimal shift in the
denominator. These equations are just the integral equations
(\ref{df}) defining $f_\pm(\lam)$. Therefore, we can make the
following identification
\be\label{chif} \tilde\chi_{12}(\lam)=f_+(\lam)\, ,\ \ \
\tilde\chi_{22}(\lam)=f_-(\lam)\, . \ee
Also, noting from Eq.~(\ref{deftild}) that $\tilde \chi_{11}(\lam)
=\chi_{ 11,+}(\lam)$ and $\tilde\chi_{21}(\lam)=\chi_{22,+}(\lam)$,
and are analytical in the upper half plane of $\lam$, we have
\begin{eqnarray} \label{chi11}
\chi_{11}(\lam)&=&1-\frac{\gamma}{2i}\inti\frac{f_+(\mu)e_-(\mu)}{
\mu-\lam}d\mu\, ,\ \ \Imm \lam>0\, ,\nonumber\\
\chi_{21}(\lam)&=&-\frac{\gamma}{2i}\inti\frac{f_-(\mu)e_-(\mu)}{
\mu-\lam}d\mu\, ,\ \ \Imm \lam>0\, . \end{eqnarray}
Taking the limit $\lam\rightarrow\infty$ in these equations, we
obtain
\begin{eqnarray*}
\chi_{11}(\lam)&=&1+\frac{\gamma}{2i\lam }\inti f_+(\mu)e_- (\mu)
d\mu+O\left(\frac{1}{\lam^2}\right) ,\\
\chi_{21}(\lam)&=&\frac{\gamma}{2i\lam }\inti f_-(\mu)e_-(\mu)
d\mu+O\left(\frac{1}{\lam^2}\right) . \end{eqnarray*}
which shows that the components of the matrix $\Psi_1$ are indeed
the potentials (\ref{defblm}), as assumed in Eq.~(\ref{psi1}). As a
byproduct of this derivation, one can also see from Eq.~(\ref{chif})
that a part of the matrix product (\ref{deftild}) means that
\be \label{obtf} \left(\begin{array}{c} f_+(\lam)\\
f_-(\lam)\end{array}\right)= \chi_+(\lam)\left(\begin{array}{c}
e_+(\lam)\\ e_-(\lam)\end{array}\right) . \ee
This relation will be used in Sec.~\ref{LDAA} in the analysis of the
large-distance asymptotics.

%%%%%%%%%%%%%%%%%%%%%%%%%%%%%%%%%%%%%%%%%%%%%%%%%%%%%%%%%%%%%%%%%%%%%%%
\subsection{An useful transformation of the RHP} \label{trans}

As we have shown above, the potentials $B_{++},B_{+-}$ that
completely characterize the field-field correlator (\ref{fc}) through
Eq.~(\ref{fullexp}), can be obtained from the expansion of the
solution of the matrix RHP (\ref{RHP}). This means that the
large-distance asymptotics of the potentials, and hence of the field
correlator, can be obtained from the solution of the RHP. To find
this solution, we will use equivalent formulation of the RHP in terms
of the singular integral equation (\ref{inteq}). As a first step, we
transform the matrix RHP (\ref{RHP}) into a new form which has
conjugation matrix with $1$ on the diagonal. This is achieved by
introducing the matrix function $\Phi$ defined as
\[ \Phi(\lam)=\chi(\lam)\left(\begin{array}{cc}
                                         \bet^{-1}(\lam)&0\\
                                         0&\alf^{-1}(\lam)
                            \end{array}\right) , \]
where the functions $\alf(\lam),\bet(\lam)$ are:
\be\label{defalf} \alf(\lam)=\exp\left\{-\frac{1}{2\pi i}\inti
\frac{d\mu}{\mu-\lam} \ln\left(\frac{e^{\mu^2-\beta}-e^{i\pi \kappa}
}{ e^{\mu^2-\beta}+1}\right)\right\} , \ee
and
\[ \bet(\lam)=\exp\left\{-\frac{1}{2\pi i}\inti \frac{d\mu}{\mu-
\lam}\ln\left(\frac{e^{\mu^2-\beta}+e^{i\pi\kappa}+2}{e^{\mu^2-
\beta} +1} \right)\right\} . \]
The branch of the logarithm in these equations is specified by the
requirement that $\ln \, (...) \rightarrow 0$ for $\mu \rightarrow
\infty$, and it is important at this point that $\kappa\in(0,1]$.
Both functions $\alf(\lam)$ and $\bet(\lam)$ are analytic separately
in the upper and lower half-plane, and are the solutions of the
following scalar Riemann-Hilbert problems (see Appendix \ref{srhp})
\be \label{RHa} \alf_-(\lam)= \alf_+(\lam)g_\alf(\lam)\, ,\
\lam\in\mathbb{R}\, ; \;\;\; \alf(\infty)=1\, ; \;\;\;
g_\alf(\lam)=\frac{e^{\lam^2-\beta}-e^{i\pi\kappa}}{e^{\lam^2-
\beta} +1} \, , \ee
and
\be \label{RHb} \bet_-(\lam)=\bet_+(\lam)g_\bet(\lam)\, ,\
\lam\in\mathbb{R}\, ; \;\;\; \bet(\infty)= 1\, ; \;\;\;
g_\bet(\lam)=\frac{e^{\mu^2-\beta}+e^{i\pi\kappa}+2}{e^{\mu^2-\beta}+1}\,
. \ee
They also have the property
\[ \alf^{-1}(\lam)=\alf(-\lam)\, ,\ \ \bet^{-1}(\lam)=\bet(-\lam)\,
. \]
Using these relations one can see from Eq.~(\ref{RHP}) that the
matrix $\Phi$ satisfies the following RHP
\be \label{RHP2} \Phi_-(\lam)=\Phi_+(\lam)G_\Phi(\lam)\, , \ \
\lam\in\mathbb{R}\, ; \;\;\; \Phi(\infty) = 1\, , \ee
with the conjugation matrix
\be G_\Phi(\lam)=\left(\begin{array}{cc}
1&-\pi\gamma\vartheta(\lam)\bet_+(\lam)\alf_-^{-1}(\lam)e^{2i\lam x}
\\ \pi\gamma\vartheta(\lam)\alf_+(\lam)\bet_-^{-1}(\lam) e^{-2i\lam
x}& 1 \end{array}\right)\, .  \label{off} \ee
Equations (\ref{lelam}) and (\ref{psi1}), together with the fact
that $\alf(\lam)\, ,\bet(\lam) \rightarrow 1$ at $\lambda
\rightarrow \infty$, show that in terms of the components of the
matrix $\Phi$ that solves the transformed RHP (\ref{RHP2}), the
potentials $B_{++},B_{-+}$ are expressed as
\be \label{extract}
B_{+-}=2i\lim_{\lam\rightarrow\infty}\lam\left[\Phi_{11}(\lam)-
\bet^{-1}(\lam)\right]\, , \;\;\; B_{++}=-2i\lim_{\lam\rightarrow
\infty}\lam\Phi_{12}(\lam)\, . \ee
The RHP (\ref{RHP2}) and formulae (\ref{extract}) will be the basis
for our analysis of the large-distance asymptotics of the
field-field correlator of impenetrable anyons.

%%%%%%%%%%%%%%%%%%%%%%%%%%%%%%%%%%%%%%%%%%%%%%%%%%%%%%%%%%%%%%%%%%%
\section{Large-Distance Asymptotic Analysis}\label{LDAA}
%%%%%%%%%%%%%%%%%%%%%%%%%%%%%%%%%%%%%%%%%%%%%%%%%%%%%%%%%%%%%%%%%%%

In this section, we perform the large-distance asymptotic analysis
of the RHP (\ref{RHP2}). The strategy is to use Eq.~(\ref{extract})
to obtain the auxiliary potentials $B_{++}(x,\beta,\kappa)$ and
$B_{+-}(x,\beta,\kappa)$ from the large-$\lam$ expansion of
$\Phi_{11,+}(\lam)$ and $\Phi_{12,+}(\lam)$. Then, the function
$\sigma(x,\beta,\kappa)$ can be calculated using the differential
equations (\ref{derxs}) and (\ref{derbs}).

%%%%%%%%%%%%%%%%%%%%%%%%%%%%%%%%%%%%%%%%%%%%%%%%%%%%%%%%%%%%%%%%%%%%%
\subsection{The auxiliary potentials in the large-$x$ limit}\label{largexap}

The fact that the conjugation matrix $G_\Phi$ of the RHP (\ref{RHP2})
has $1$ on the diagonal simplifies the matrix structure of the kernel
$I-G_\Phi$ of the integral equation formulation (\ref{inteq}) of this
problem. Combined with the explicit form (\ref{off}) of the
off-diagonal elements of $G_\Phi$, and Eqs.~(\ref{RHa}) and
(\ref{RHb}) for $\alf$ and $\bet$, the matrix integral equation gives
then the following equations for $\Phi_{11,+}(\lam),
\Phi_{12,+}(\lam)$:
\be\label{11} \Phi_{11,+}(\lam)=1-\frac{(1+e^{i\pi\kappa})}{2\pi
i}\inti\frac{\Phi_{12,+}(\mu)}{(\mu-\lam-i0)}\frac{\alf_-(\mu)}{
\bet_-(\mu)} \frac{ e^{-2i\mu x}}{(e^{\mu^2-\beta}-e^{i\pi\kappa}) }
\, ,\ \  \Imm \lam=0\, , \ee
and
\be\label{12} \Phi_{12,+}(\lam)=\frac{(1+e^{i\pi\kappa})}{2\pi
i}\inti\frac{ \Phi_{11,+}(\mu)}{(\mu-\lam-i0)}\frac{\bet_+(\mu)}{
\alf_+ (\mu)} \frac{e^{2i\mu x}}{(e^{\mu^2-\beta}-e^{i\pi\kappa})}\,
,\ \  \Imm \lam=0\, . \ee
An important feature of these equations is that the analytical
properties of $\alf (\lam)$ and $\bet (\lam)$ together with the fact
that $\Phi_{11}(\infty) =\alf(\infty)=\bet(\infty)=1$, make it
possible to obtain an estimate of $\Phi_{12,+}$ by shifting the
contour in the upper half-plane and evaluating the integral by the
sum of the residues. Strictly speaking, this approach requires the
functions $\Phi$ to conform to estimates of the type $|\Phi_{12}
(\lam,x,\beta)| \leq C/\lam$ and $|\Phi_{11}(\lam,\beta,x) -1| \leq
D/\lam$ for $\Imm \lam\geq 0$ in some ranges of $\beta$ and $x$:
$\beta<\beta_0\, ,x>x_0$, where $C,D$ depends only on the constants
$\beta_0\, , x_0$. In what follows, we assume that these estimates
hold, noting that they can be justified self-consistently as in Sec.
4 of \cite{IIK3}. The poles of the integrand in Eq.~(\ref{12}) are
given by $\lam+i0$ and the zeros $\lam_k$ of the function $e^{\lam^2
-\beta} -e^{i\pi\kappa}$ in the upper half-plane. The zeros of this
function are given by the formulae:
\be\label{roots1} (\Ree \lam_k)^2=\frac{1}{2} \left(\beta+ \sqrt{
\beta^2+\pi^2[\kappa+ 2k]^2}\right)\, ,\ \ (\Imm \lam_k)^2 =\frac{1
}{2}\left(-\beta+\sqrt{ \beta^2+ \pi^2[\kappa+ 2k]^2} \right) \, ,\ \
k=0,\pm 1,\pm 2,\cdots , \ee
and, as we will see below, the leading terms in the asymptotic
behavior are determined by $\lam_0^+$ and $\lam_{-1}^+$ which are
closest to the real axis:
\be\label{l0p}
\lam_0^+=\left(\beta+\sqrt{\beta^2+\pi^2\kappa^2}\right)^{1/2}/\sqrt{2}
+i\left(-\beta+\sqrt{\beta^2+\pi^2\kappa^2}\right)^{1/2}/\sqrt{2}\,
,\ \ \ee
and
\be \label{l1p}
\lam_{-1}^+=-\left(\beta+\sqrt{\beta^2+\pi^2[\kappa-2]^2}\right)^{1/2}
/\sqrt{2}+i\left(-\beta+\sqrt{\beta^2+\pi^2[\kappa-2]^2}\right)^{1/2}
/\sqrt{2}\, .\ \ \ee
The superscript $+$ denotes the zeros in the upper half-plane.
Closing the contour in this half-plane, we obtain
\be\label{int17}
\Phi_{12,+}(\lam)=(1+e^{i\pi\kappa})\Phi_{11,+}(\lam)\frac{\bet_+
(\lam )}{ \alf_+(\lam)}\frac{e^{2i\lam x}}{(e^{\lam^2-\beta}-
e^{i\pi\kappa})} +S^+(\lam)\, , \ee
with
\[ S^+(\lam)=\sum_{k=-\infty}^{+\infty}
\frac{(1+e^{i\pi\kappa})}{2e^{i\pi\kappa}}\frac{\bet(\lam_k^+)}{\alf
(\lam_k^+)} \frac{\Phi_{11}(\lam_k^+)e^{2i\lam_k^+ x}}{(\lam_k^+
-\lam) \lam_k^+} \, . \]
The series $S^+(\lam)$ is uniformly convergent for
$\lam\in\mathbb{R}$,  $x_0\leq x$, $0<\kappa\leq 1$, and with $\beta
\leq \beta_0$. Substituting Eq.~(\ref{int17}) into (\ref{11}), and
using the relations (\ref{RHa}) and (\ref{RHb}), we obtain the
following representation for Eq.~(\ref{11}):
\begin{eqnarray}\label{int18}
\Phi_{11,+}(\lam)&=&1-\frac{(1+e^{i\pi\kappa})^2}{2\pi i}\inti
\frac{\Phi_{11,+}(\mu)}{\mu-\lam-i0} \frac{d\mu}{ (e^{\mu^2 -\beta}
-e^{i\pi\kappa})(e^{\mu^2-\beta}+e^{i\pi\kappa}+2)}\nonumber\\
& &\ \ \ \ \ \ \ \ \ \ \ -\frac{1}{2\pi i}\inti \frac{R^+(\mu)}{
\mu-\lam-i0}d\mu\, , \end{eqnarray}
where
\[ R^+(\mu)=(1+e^{i\pi\kappa})\frac{\alf_-(\mu)}{\bet_-(\mu)}
\frac{e^{-2i\mu x}}{(e^{\mu^2-\beta}-e^{i\pi\kappa})}S^+(\mu)\, . \]
The form of the integral equation (\ref{int18}) suggests (see
Eq.~(\ref{aimp}) in Appendix \ref{srhp}) that it is useful to
consider the following inhomogeneous scalar RH problem for the
function $\tilde\Phi(\lam)$:
\be\label{RHP3}
\tilde\Phi_-(\lam)=\tilde\Phi_+(\lam)g(\lam)+R^+(\lam)\, ,\ \ \lam
\in\mathbb{R}\, ,\;\;\;\; \tilde\Phi(\infty)=1\, , \ee
such that $\tilde\Phi_+(\lam)=\Phi_{11,+}(\lam)$ for $\lam\in
\mathbb{R}$. The function $g$ in this RHP is given by the relation
\[ 1-g(\lam)=-\frac{(1+e^{i\pi\kappa})^2}{(e^{\lam^2-\beta}-
e^{i\pi\kappa})(e^{\lam^2-\beta}+e^{i\pi\kappa}+2)}\, , \]
which shows that using Eqs.~(\ref{RHa}) and (\ref{RHb}), $g(\lam)$
can be written as
\[ g(\lam)=\frac{(e^{\lam^2-\beta}+1)^2}{(e^{\lam^2-\beta}-
e^{i\pi\kappa})(e^{\lam^2-\beta}+e^{i\pi\kappa}+2)}= \frac{
\alf_-^{-1}(\lam)\bet_-^{-1}(\lam)}{\alf_+^{-1}(\lam)\bet_+^{-1}
(\lam)}\, . \]
This form of $g(\lam)$ combined with Eq.~(\ref{solution}) implies
that the solution of the RH problem (\ref{RHP3}) is:
\be\label{int19}
\tilde\Phi(\lam)=\alf^{-1}(\lam)\bet^{-1}(\lam)-\frac{\alf^{-1}
(\lam) \bet^{-1}(\lam)}{2\pi i} \inti\frac{\alf_-(\mu)\bet_-
(\mu)R(\mu)}{\mu-\lam}d\mu\, ,\ \ \ \lam\in\mathbb{C}/\mathbb{R}\, .
\ee
The functions $\tilde\Phi(\lam)$ and $\Phi_{11}(\lam)$ are analytic
in the upper half-plane. They have the same behavior at infinity:
$\tilde\Phi(\infty)=\Phi_{11}(\infty)=1$, and their boundary values
at the real axis $\tilde\Phi_+(\lam),\Phi_{11,+}(\lam)\,
,\lam\in\mathbb{R}$ are equal. This means that they coincide,
$\Phi_{11}(\lam)=\tilde\Phi(\lam)$, for $\Imm \lam>0$. Therefore,
from Eq.~(\ref{int19}) and the explicit expression for $R^+(\lam)$ we
have
\[ \Phi_{11}(\lam)=\alf^{-1}(\lam)\bet^{-1}(\lam)-\frac{\alf^{-1}
(\lam) \bet^{-1}(\lam)}{2\pi i}\sum_{k=-\infty}^{+\infty}
A(\lam_k^+,\lam)\Phi_{11}(\lam^+_k)\, ,\ \ \Imm \lam\geq 0\, , \]
where
\[ A(\lam_k^+,\lam)=\frac{(1+e^{i\pi\kappa})^2}{2\lam_k^+
e^{i\pi\kappa}}\frac{\bet(\lam_k^+)}{\alf(\lam_k^+)} e^{2i\lam^+_k
x}\inti\frac{\alf^2_-(\mu)e^{-2i\mu x}}{(e^{\mu^2-\beta} -e^{i
\pi\kappa})(\lam^+_k-\mu)(\mu-\lam)}d\mu\, . \]
For $x\rightarrow\infty$, the integral in the last equation can be
estimated as $ C|e^{-2i\lam_0^- x}|$, where $\lam_0^-$ is the $k=0$
zero (\ref{roots1}) in the lower half-plane, so the leading term of
$\Phi_{11}$ in the limit $x\rightarrow\infty$ is given by
\be\label{int20}
\Phi_{11}(\lam)=\alf^{-1}(\lam)\bet^{-1}(\lam)+O(e^{-4\Imm\lam_0^+
x})\, ,\ \ \Imm \lam>0\, , \ee
where we took into account that the magnitudes of the imaginary parts
of $\lam_0^{\pm}$ are the same. Using Eq.~(\ref{int20}) in
(\ref{12}), we obtain the leading term of $\Phi_{12}$ for large $x$:
\be\label{int21} \Phi_{12}(\lam)=\frac{(1+e^{i\pi\kappa})}{2\pi
i}\inti\frac{ \alf_+^{-2}(\mu)}{(\mu-\lam-i0)} \frac{e^{2i\mu
x}}{(e^{\mu^2-\beta}-e^{i\pi\kappa})}+O(e^{-4\Imm\lam_0^+ x})\, ,\ \
\Imm \lam>0\, . \ee
Substitution of the asymptotics (\ref{int20}) and (\ref{int21}) into
Eq.~(\ref{extract}) gives the large-distance asymptotic behavior of
the potentials $B_{++},B_{+-}$:
\be \label{pot1}
B_{+-}=\frac{1}{\pi}\inti\ln\left(\frac{e^{\mu^2-\beta}+1}{e^{\mu^2-
\beta}-e^{i\pi\kappa}}\right)d\mu+ O(e^{-4\Imm\lam_0^+ x})\, ,\;\;\;
x\rightarrow\infty\, , \ee
where the branch of the logarithm is fixed by the requirement that
$\ln(...)\rightarrow 0$ for $\mu \rightarrow \infty$, and
\[ B_{++}=\frac{(1+e^{i\pi\kappa})}{\pi}\inti\frac{\alf_+^{-2} (\mu)
e^{2i\mu x}}{ (e^{\mu^2-\beta}-e^{i\pi\kappa})}d\mu+O(e^{-4\Imm\lam_0^+
x})\, ,\;\;\;   x\rightarrow\infty\, . \]
The expression for $B_{++}$ can be made more precise by closing the
contour in the upper half-plane and evaluating the integral as the
sum of the residues
\be B_{++}=i(1+e^{-i\pi\kappa})\sum_{k=min}^{max}\frac{\alf^{-2}
(\lam_k^+) }{\lam_k^+}e^{2i\lam_k^+x} +O(e^{-4\Imm\lam_0^+ x})\, ,
\label{pot2} \ee
where the accuracy of our calculation of the leading term in the
asymptotics implies that the sum over poles can be limited to the
interval between $k=min(\beta,\kappa)<0$ and $k=max(\beta,\kappa)>0$
that are defined by the relations
\be\label{min} \Imm \lam_{min}^+<2\Im\lam_0^+\, ,\ \ \Imm
\lam_{min-1}^+>2\Imm \lam_0^+\, , \ee
and
\be\label{max} \Imm \lam_{max}^+<2\Im\lam_0^+\, ,\ \ \Imm
\lam_{max+1}^+>2 \Imm\lam_0^+\, . \ee
%

%%%%%%%%%%%%%%%%%%%%%%%%%%%%%%%%%%%%%%%%%%%%%%%%%%%%%%%%%%%%%%%%%%%%%%%%%%%%
\subsection{Determination of $\sigma(x,\beta,\kappa)$ in the large-$x$
limit }

The large-distance asymptotics of the auxiliary potentials obtained
in the previous section and the differential equations (\ref{derxs})
and (\ref{derbs}) give the following asymptotic behavior of the
function $\sigma(x,\beta,\gamma) \equiv \ln \det(1-\gamma \hat K_T)$:
\be\label{sigmawc}
\sigma(x,\beta,\kappa)=-xC(\beta,\kappa)+c(\beta,\kappa)+
O(e^{-4\Imm\lam_0^+ x})\, ,\   x\rightarrow\infty\, , \ee
where  $C(\beta,\kappa)$ is defined by Eq.~(\ref{defcbk1}), and
$c(\beta,\kappa)$ is a constant that depends on  $\beta$ and
$\kappa$, but is independent of $x$. In order to determine this
constant, which through $\sigma(x,\beta,\kappa)$ determines the
$\beta$- and $\kappa$-dependence of the amplitudes of the asymptotic
terms in the expansion of the field correlators, we integrate the
relation
\be\label{degamma}
\partial_\gamma \sigma=-\inti R_T(\lam,\lam)\ d\lam\, ,
\ee
which is obtained by taking the derivative of Eq.~(\ref{repr}) with
respect to $\gamma$ and using the relation $(1-\gamma \hat
K_T)^{-1}\hat K_T=\hat R_T$. As shown in Sec.~\ref{sigma},
$R_T(\lam,\lam)$ in Eq.~(\ref{degamma})is:
\be\label{rll} R_T(\lam,\lam)=\frac{1}{2i}(\partial_\lam
f_+(\lam)f_-(\lam)- f_+(\lam)\partial_\lam f_-(\lam))\, . \ee
Our approach is similar to the one employed in Ref.~\cite{KKMST} (see also \cite{CZ}),
where the authors considered a generalized sine-kernel at zero
temperature and studied the large-distance asymptotic behavior of the
Fredholm determinant using the nonlinear steepest descent method of
Deift and Zhou \cite{DZ}. At zero temperature, $\det(1-\gamma \hat
K)$ becomes a particular case of the Fredholm determinant considered
in \cite{KKMST}.

The first step is the computation of $f_\pm(\lam)$  using the
relation (\ref{obtf}) and the following expression for the function
$\chi (\lam)$ in this relation:
\be\label{expchi} \chi(\lam)=\Phi(\lam)\left(\begin{array}{lr}
\bet(\lam)&0\\0&\alf( \lam)\end{array}\right)\, . \ee
In the previous Section, we have obtained Eqs.~(\ref{int20}) and
(\ref{int21}) for the  matrix elements $\Phi_{11}(\lam)$ and
$\Phi_{12}(\lam)$ of the matrix $\Phi(\lam)$ in the large-$x$ limit.
Closing the integration contour in the upper half plane in
Eq.~(\ref{int21}), one gets more explicitly:
\be\label{p12}
\Phi_{12}(\lam)=\pi\gamma\frac{\alf_+^{-2}(\lam)}{(e^{\lam^2-\beta}-
e^{i\pi\kappa})}e^{2i\lam x}+O(e^{-2\Imm\lam_0^+x})\, . \ee
One needs to find similar asymptotics for $\Phi_{21}$ and
$\Phi_{22}$. As in the previous Section, the integral formulation
(\ref{inteq}) of the RHP gives the following equations for these
matrix elements:
\be\label{ip21}
\Phi_{21,+}(\lam)=-\frac{\gamma}{2i}\inti\frac{\Phi_{22,+}(\mu) }{
\mu-\lam-i0}\frac{\alf_-(\mu)}{\bet_-(\mu)} \frac{e^{-2i\mu
x}}{(e^{\mu^2-\beta}-e^{i\pi\kappa})}d\mu\, ,\ \  \lam\in\mathbb{R}\,
, \ee
and
\be\label{ip22}
\Phi_{22,+}(\lam)=1+\frac{\gamma}{2i}\inti\frac{\Phi_{21,+}(\mu)}{
\mu-\lam-i0}\frac{\bet_+(\mu)}{\alf_+(\mu)} \frac{e^{2i\mu x}
}{(e^{\mu^2-\beta}-e^{i\pi\kappa})}d\mu\, ,\ \ \lam\in\mathbb{R}\, .
\ee
In the integral equation (\ref{ip21}), one can close the contour in
the lower half plane obtaining
\[ \Phi_{21,+}(\lam)=-\pi\gamma\sum_{k=-\infty}^\infty \frac{
\Phi_{22,+}(\lam_k^-)}{\lam_k^--\lam-i0}\frac{\alf_-(\lam_k^-)}{
\bet_-(\lam_k^-)} \frac{e^{-2|\Imm\lam_k^-|x}}{2\lam_k^-e^{i
\pi\kappa}}\, , \]
where $\lam_k^-$ are the zeroes of $e^{\mu^2-\beta}-e^{i\pi\kappa}$
in the lower half plane. Substituting this result in the integral
equation for $\Phi_{22,+}$ and closing the contour in the upper half
plane we find
\be\label{p22} \Phi_{22,+}(\lam)=1+O(e^{-2(|\Imm\lam_0^-|+\Imm
\lam_0^+)x})\, ,\ee
and, combining Eqs.~(\ref{p22}) and (\ref{ip21}),
\be\label{p21} \Phi_{21,+}(\lam)=O(e^{-2|\Imm\lam_0^-|x})\, . \ee
Now we are able to compute $f_\pm(\lam)$ up to exponentially small
corrections in $x$. Explicitly, Eqs.~(\ref{obtf}) and (\ref{expchi})
give:
\[ \left(\begin{array}{c} f_+(\lam)\\ f_-(\lam)\end{array} \right)=
\left(\begin{array}{c} \Phi_{11,+}(\lam)\bet_+(\lam)e_+(\lam)+
\Phi_{12,+}(\lam)\alf_+(\lam)e_-(\lam)\\ \Phi_{21,+} (\lam)
\beta_+(\lam)e_+(\lam)+\Phi_{22,+}(\lam) \alf_+(\lam)e_-(\lam)
\end{array}\right) , \]
and, therefore, using Eqs.~(\ref{int20}), (\ref{p12}), (\ref{p22}),
and (\ref{p21}) we find:
\be\label{expf}
\left(\begin{array}{c} f_+(\lam)\\ f_-(\lam)\end{array}\right)=
\left(\begin{array}{c}\alf_+^{-1}(\lam)e_+(\lam)e^{2\pi i\nuf(\lam)}
\\ \alf_+(\lam)e_-(\lam)\end{array}\right) . \ee
The function $\nuf(\lam)$ here (we have suppressed the dependence on
$\beta$) is defined in Eq.~(\ref{defnuf1}). One can see directly that
$\nuf(\lam)$ can be written in terms of the rescaled Fermi weight
$\theta(\lam)=(1+e^{\lam^2-\beta})^{-1}$ as
\[ \nuf(\lam)=-\frac{1}{2\pi i}\log(1-\pi\gamma\theta(\lam))=
\frac{1}{2\pi i}\log\left(\frac{e^{\lam^2-\beta}+1}{
e^{\lam^2-\beta}-e^{i\pi\kappa}}\right) , \]
and also that
\[ \alf_+(\lam)=\exp\left\{i\pi\nuf(\lam)+\mbox{P.V.} \inti
\frac{\nuf(\mu)}{\mu-\lam}d\mu\right\}\, . \]
Using Eqs.~(\ref{expf}) and (\ref{rll}) in Eq.~(\ref{degamma}), we
have
\begin{eqnarray}\label{intdg}
\partial_\gamma\sigma=-\inti R_T(\lam,\lam) d\lam &=& -\inti
\frac{1}{2\pi i}\frac{\pi\theta(\lam)}{(1-\pi\gamma \theta
(\lam))}(2ix -2\partial_\lam\log\alf_+(\lam)+2i\pi
\partial_\lam \nuf(\lam))\ d\lam\, ,\nonumber\\
&=&-\inti\partial_\gamma\nuf(\lam)(2ix
-2\partial_\lam\log\alf_+(\lam)+2i\pi \partial_\lam \nuf(\lam))\
d\lam\, .
\end{eqnarray}
Our current task is to write the RHS of Eq.~(\ref{intdg}) as a
derivative with respect to $\gamma$. The term depending on $x$
already has this form, but the last two, $x$-independent, terms do
not. We transform them as follows:
\begin{eqnarray*}
\inti\partial_\gamma\nuf(\lam)(2\partial_\lam\log\alf_+(\lam)- 2i\pi
\partial_\lam \nuf(\lam)) &=& \inti\partial_\gamma\nuf(\lam)
\partial_\lam \inti\left(\frac{\nuf(\mu)}{\mu-\lam-i0}+\frac{\nuf(\mu)
}{\mu-\lam+i0} \right)d\mu d\lam\, ,\nonumber\\ &=&\inti
\partial_\gamma \nuf(\lam)\left(\frac{\nuf(\mu)}{(\lam-\mu-i0)^2}
+\frac{\nuf(\mu)}{(\lam-\mu+i0)^2}\right)d\mu d\lam\, .
\end{eqnarray*}
Next, we show that this expression is equal to
\[ \partial_\gamma\inti\frac{\partial_\lam \nuf(\lam)\nuf(\mu)-
\nuf(\lam)\partial_\mu(\mu)}{2(\lam-\mu)}\ d\lam d\mu\, . \]
Indeed,
\begin{eqnarray*}
\partial_\gamma\inti \frac{\partial_\lam \nuf(\lam)\, \nuf(\mu)-
\nuf(\lam) \partial_\mu \nuf (\mu)}{2(\lam-\mu)}\ d\lam d\mu\
=\inti\frac{[(\partial_\gamma\partial_\lam\nuf(\lam)) \,
\nuf(\mu)+\partial_\lam \nuf (\lam)\partial_\gamma \nuf
(\mu)]}{\lam-\mu} d\lam d\mu\, ,\nonumber\\
=\frac{1}{2}\inti[(\partial_\gamma\partial_\lam\nuf(\lam))
\nuf(\mu)+\partial_\lam \nuf (\lam)\partial_\gamma \nuf (\mu)]
\left(\frac{1}{\lam-\mu+i0}+\frac{1}{\lam-\mu-i0}\right) d\lam d\mu\,
,\nonumber\\ =\frac{1}{2}\inti[\partial_\gamma\nuf(\lam)\nuf(\mu)
+\nuf(\lam)\partial_\gamma\nuf(\mu)] \left(\frac{1}{(\lam-\mu+i0)^2}
+\frac{1}{ (\lam-\mu-i0)^2}\right) d\lam d\mu\, ,\nonumber\\
=\inti\partial_\gamma\nuf(\lam)\left(\frac{\nuf(\mu)}{(\lam-
\mu-i0)^2}+\frac{\nuf(\mu)}{(\lam-\mu+i0)^2}\right)d\mu d\lam\, .
\end{eqnarray*}
Therefore,
\[ \partial_\gamma\sigma=-2ix\partial_\gamma\inti\nuf(\lam)d\lam
+ \partial_\gamma\inti\frac{\partial_\lam \nuf(\lam)\nuf(\mu)-
\nuf(\lam)\partial_\mu(\mu)}{2(\lam-\mu)}\ d\lam d\mu\, . \]
Integrating this relation with respect to $\gamma$ and taking into
account that $\sigma(\gamma=0)=0$, we obtain
\be\label{sigmapc} \sigma(x,\beta,\kappa)=-xC(\beta,\kappa)+
\inti\frac{\partial_\lam \nuf(\lam)\nuf(\mu)-\nuf(\lam)
\partial_\mu(\mu)}{2(\lam-\mu)}\ d\lam d\mu\, . \ee
Comparison with Eq.~(\ref{sigmawc}) shows finally that
\be\label{const1} c(\beta,\kappa)=\inti\frac{\partial_\lam \nuf(\lam)\nuf(\mu)
-\nuf(\lam)\partial_\mu \nuf(\mu)}{2(\lam-\mu)}\ d\lam d\mu\, . \ee

One can also obtain an alternative expression for the constant
$c(\beta,\kappa)$ by a simpler calculation. Using again the
differential equation (\ref{derbs}) and expressions for the
potentials (\ref{pot1}) and (\ref{pot2}), we have
\[ \sigma(x,\beta,\kappa)=-xC(\beta,\kappa)+\frac{1}{2}
\int_{-\infty}^\beta \left(\frac{dC(\beta',\kappa)}{d\beta'}
\right)^2 d\beta' +c(\kappa)+ O(e^{-4\Imm\lam_0^+ x})\, ,\
x\rightarrow\infty\, , \]
where $c(\kappa)$ is a constant that depends only on $\kappa$. The
initial condition $\sigma(x,-\infty,\kappa)=0$ (\ref{incs}) together
with the fact that the asymptotic behavior of $\sigma(x,\beta,
\kappa)$ is uniform in $\beta$, and $\kappa\in(0,1]$, imply that
$c(\kappa)=0$. This gives the following expression for
$c(\beta,\kappa)$:
\be\label{const2} c(\beta,\kappa)=\frac{1}{2}\int_{-\infty}^\beta
\left(\frac{dC(\beta',\kappa)}{d\beta'}\right)^2d\beta'\, . \ee
%

%%%%%%%%%%%%%%%%%%%%%%%%%%%%%%%%%%%%%%%%%%%%%%%%%%%%%%%%%%%%%%%%%%%%%%%
\subsection{Large-distance asymptotic behavior of the field-field
correlator}

As a reminder, we note that the anyonic field-field correlation
function can be written combining Eqs.~(\ref{fc}) and (\ref{dg}) as
follows:
\[ \la\fad(x_1)\fa(x_2)\ra_T=\frac{\sqrt{T}}{2\pi \gamma}B_{++}
(x,\beta,\kappa)e^{\sigma(x,\beta,\kappa)}\, , \ \ \gamma=(1+e^{i\pi
\kappa})/\pi\, . \]
Using in this equation the expressions obtained in the previous
Section for the potential $B_{++}$ and $\sigma$, and going back to
the original variables related to rescaled ones through
$x=x_{12}\sqrt{T}/2$ and $\beta=h/T$, we arrive at our main result
for the large-distance asymptotic behavior of the field-field
correlator
\be\label{asymptoticp}
\la\fad(x_1)\fa(x_2)\ra_T=e^{-x_{12}\frac{\sqrt{T}}{2}C(h/T,\kappa)}
e^{c(h/T,\kappa)} \sum_{k=min}^{max} c_k e^{ix_{12}\sqrt{T}\lam_k^+}
+O(e^{-2\Imm\lam_0^+ \sqrt{T}x_{12}})\, , \ee
where
\be \label{con2}
c_k=i\frac{e^{i\pi\kappa}\sqrt{T}}{2}\frac{\alf^{-2}(\lam_k^+)}{
\lam_k^+}\, . \ee
and $\lam_k^+$ are the zeros (\ref{roots1}) in the upper half-plane.
In the asymptotics (\ref{asymptoticp}), the functions
$C(\beta,\kappa)$, $c(\beta,\kappa)$, and $\alf(\lam)$, are defined,
respectively, by Eqs.~(\ref{defcbk1}), (\ref{con}) or (\ref{const2}), and
(\ref{defalf}), and the limits of summation over $k$ are given by
(\ref{min}) and (\ref{max}). The leading part of the asymptotics is
the term $k=0$, but as we approach the free fermionic point
$\kappa\rightarrow 1$, the $k=-1$ term also becomes relevant.

%%%%%%%%%%%%%%%%%%%%%%%%%%%%%%%%%%%%%%%%%%%%%%%%%%%%%%%%%%%%%%%%%%%%%%%
\section{Analysis of the results}\label{A}
%%%%%%%%%%%%%%%%%%%%%%%%%%%%%%%%%%%%%%%%%%%%%%%%%%%%%%%%%%%%%%%%%%%%%%%

As the last step, we check the validity of our main result in the
appropriate limits. When the statistics parameter $\kappa\rightarrow
0$ we should reproduce the results obtained for impenetrable bosons
in \cite{IIK1,IIK3,KBI}. Even though in the large-distance analysis
performed above we have not made any distinction between the cases of
negative and positive chemical potential (or, equivalently, $\beta$),
in what follows, we will see that in the bosonic limit, the
asymptotic behavior of the corrrelators is fundamentally different in
the two regions, as one would expect from \cite{IIK1,IIK3,KBI}. In
the fermionic limit $\kappa\rightarrow 1$, our result for the
field-field correlator of the anyons should reduce to the correlator
of the free fermions. At low temperatures, $\beta\rightarrow \infty$,
and positive chemical potential, the system becomes critical, and our
result should agree with the predictions of the conformal field
theory \cite{CM,PKA}.

%%%%%%%%%%%%%%%%%%%%%%%%%%%%%%%%%%%%%%%%%%%%%%%%%%%%%%%%%%%%%%%%%%%%%%%
\subsection{The bosonic limit}\label{bosoniclimit}

In the bosonic limit $\kappa\rightarrow 0$, Eq.~(\ref{defcbk1}) gives
in the case of positive and negative chemical potential,
respectively,
\be\label{int24} \lim_{\kappa\rightarrow 0} C(\beta,\kappa)=
\frac{1}{\pi}\inti\ln\left|\frac{e^{\lam^2-\beta}+1}{e^{\lam^2-
\beta}-1} \right|d\lam+2i\sqrt{\beta}\, ,\; \;\; \beta>0\, , \ee
and
\be\label{int25} \lim_{\kappa\rightarrow 0} C(\beta,\kappa)
=\frac{1}{\pi}\inti\ln\left(\frac{e^{\lam^2-\beta}+1}{e^{
\lam^2-\beta}-1}\right)d\lam\, ,\; \;\; \beta<0\, . \ee
Also, we have from Eq.~(\ref{l0p})
\be\label{int26} \lam_0^+=\sqrt{\beta}\,  ,\;\;  \beta>0\, ; \;\;\;\;
\lam_0^+=i\sqrt{|\beta|}\, ,\;\; \beta<0\,  . \ee
Introducing the function $C(\beta)$:
\be C(\beta)\equiv\frac{1}{\pi}\inti\ln\left|\frac{e^{\lam^2
-\beta}+1}{e^{\lam^2-\beta}-1}\right|d\lam\, ,\;\;\; \beta=h/T\, ,
\ee
and using Eq.~(\ref{asymptoticp}) reduced according to the previous
equations of this Section, we obtain the following asymptotic
behavior of the correlation function at negative chemical potential
\be\label{ii1} \la\fad(x_1)\fa(x_2)\ra_T\simeq \frac{T}{
2\sqrt{|h|}}\exp\left[a(h/T)+\frac{1}{2} \int_{-\infty }^{h/T}
\left(\frac{dC(\beta',\kappa)}{d\beta'}\right)^2d\beta'\right]
e^{-x_{12}\left[\frac{\sqrt{T}}{2} C(h/T)+\sqrt{|h|}\right]}\, , \ee
\[ h<0\, ,\ \ \ x_{12}\equiv (x_1-x_2)\rightarrow\infty\, , \]
where
\[ a(\beta)=\frac{|\beta|^{1/2}}{\pi}\inti\frac{d\mu}{\mu^2+
|\beta| }\ln\left(\frac{e^{\mu^2+|\beta|}-1}{e^{\mu^2+|\beta|}+1}
\right)\, , \]
and we have used (\ref{const2}) for $c(\beta,\kappa)$.
Similarly, for positive chemical potential we have
\be\label{ii2} \la\fad(x_1)\fa(x_2)\ra_T\simeq
e^{-x_{12}\frac{\sqrt{T}}{2}C(h/T)}\, ,\ \ \ h>0\, ,\ \ x_{12}\equiv
(x_1-x_2)\rightarrow\infty\, . \ee
Both asymptotics (\ref{ii1}) and (\ref{ii2}) agree with the result
for impenetrable bosons obtained in \cite{IIK2,IIK3,KBI}. In the
case of positive chemical potential we made this comparison for the
exponential terms only. When $\kappa=0$ and the chemical potential
$\beta$ is positive, the function $g_\alf(\lam)$ (\ref{RHa})
possesses an index, which means that we cannot transform the RHP as
in Section \ref{trans}. Also, as can be seen from Appendix
\ref{aoc}, $dC(\beta,\kappa)/d\beta$ which appears in (\ref{const2})
becomes divergent for $\kappa=0$ and $\beta\rightarrow 0$. This
means that the rigorous calculation of the pre-exponential factor
for positive chemical potential and $\kappa \rightarrow 0$ would
require a more sophisticated approach which we did not attempt in
this work.

%%%%%%%%%%%%%%%%%%%%%%%%%%%%%%%%%%%%%%%%%%%%%%%%%%%%%%%%%%%%%%%%%%%%
\subsection{The fermionic limit}

In the model (\ref{hama}) with $c\rightarrow \infty$ we use to describe
the impenetrable anyons, in the limit of fermionic statistics
parameter $\kappa\rightarrow 1$, the anyonic field-field correlator
should coincide with that of free fermions. To see that
Eq.~(\ref{asymptoticp}) does indeed reduce to the free-fermionic
expression, we note that Eqs.~(\ref{defnuf1}) -- (\ref{con}) show
that for $\kappa= 1$,
\[ C(\beta,\kappa)=0\, , \;\;\; c(\beta,\kappa)=0\, .\]
This means that, in this case, $\alf(\lam)=1$, and the leading terms
in the correlator (\ref{asymptoticp}) take  the form:
\be\label{ff} \la\fad(x_1)\fa(x_2)\ra_T= \frac{-i\sqrt{T} }{2}
\sum_{k=-1,0} \frac{e^{ix_{12}\sqrt{T}\lam_k^+}}{ \lam_k^+} \, , \ee
Taking into account that for $\kappa= 1$, as follows from
Eqs.~(\ref{l0p}) and (\ref{l1p}), $\lambda_0^+ = -(\lambda_{-1}^+)^*
=(\beta+i\pi)^{1/2}$, one can see that Eq.~(\ref{ff}) coincides
exactly with the asymptotics of the field correlator for free
fermions, and reduces to
\be\label{fff} \la\fad(x_1)\fa(x_2)\ra_T= \frac{iT}{2} e^{-ax_{12}}
\sum_{\pm} \frac{\pm e^{\pm ib x_{12}}}{\sqrt{h\pm i\pi T}} \, , \ee
where
\be a=\left(\sqrt{h^2+\pi^2T^2}-h\right)^{1/2}/\sqrt{2}\, ,\;\;\;
b=\left(\sqrt{h^2+\pi^2T^2}+h\right)^{1/2}/\sqrt{2} \, . \ee
%

%%%%%%%%%%%%%%%%%%%%%%%%%%%%%%%%%%%%%%%%%%%%%%%%%%%%%%%%%%%%%%%%%%%
\subsection{The conformal limit}

For positive chemical potential and low temperatures, $\beta
\rightarrow \infty$, the system is conformal. The behavior of
$C(\beta,\kappa)$ in this limit is studied in Appendix \ref{aoc},
while the zeros (\ref{l0p}) and (\ref{l1p}) reduce to
\be\label{int28} \lam_0^+= \sqrt{\beta}+
i\frac{\pi\kappa}{2\sqrt{\beta}}\, , \;\;\;\; \lam_{-1}^+=
-\sqrt{\beta}+i\frac{\pi|\kappa-2|}{2\sqrt{\beta}}\,  . \ee
To make the connection with the conformal field-theory results
obtained in \cite{CM,PKA}, we use the original variables
$x=(x_1-x_2)\sqrt{T}/2, \beta=h/T$ and the fact that in the notations
used in this work, the Fermi momentum  and velocity are,
respectively, $k_F=\sqrt{h}$ and $v_F=2\sqrt{h}$. Then, using
Eqs.~(\ref{asymptoticp}) and (\ref{int28}), we obtain the following
result for the asymptotics for the field-field correlator at low
temperatures. The accuracy of our calculation of the asymptotics
(\ref{asymptoticp}), indicated by the last term in this equation, and
leading to the conditions (\ref{min}) and (\ref{max}), implies that
for $0<\kappa<2/3$ we can keep only one term in the asymptotic
expansion of the correlator:
\be\label{smallk} \la\fad(x_1)\fa(x_2)\ra_T\simeq c_0 e^{-x_{12}
\frac{\pi T}{v_F} \left(\frac{\kappa^2}{2} +\frac{1}{2}\right)}
e^{ix_{12}k_F\kappa}\, .  \ee
For $2/3<\kappa<1$, however, the two terms in the expansion are
legitimate, giving
\be \label{largek} \la\fad(x_1)\fa(x_2)\ra_T\simeq c_0
e^{-x_{12}\frac{\pi T}{v_F} \left(\frac{\kappa^2}{2}+ \frac{1}{2}
\right)} e^{ix_{12}k_F\kappa} +c_{-1}e^{-x_{12} \frac{\pi T}{v_F}
\left[2\left( \frac{ \kappa }{2}-1 \right)^2 +\frac{1}{2}\right]}
e^{ix_{12}k_F(\kappa-2)}\, . \ee
The conformal result obtained in (\cite{PKA,CM}) is
\begin{eqnarray}\label{conformal}
\la\fad(x_1)\fa(x_2)\ra_T\simeq\sum_{Q=\{N^\pm,d\}}B(Q)
e^{-x_{12}\frac{\pi T}{v_F}\left[2N^++2N^-+\frac{1}{2}+2\left(
d+\frac{\kappa}{2}\right)^2\right]} e^{ix_{12}k_F \left(
2d+\kappa\right)}\, .
\end{eqnarray}
One can see that the leading terms in the sum (\ref{conformal}),
which correspond to $Q={0,0,0}$ and $Q={0,0,-1}$ , are identical
(modulo some constants) to (\ref{smallk}) and (\ref{largek}). The
presence of the second term in Eq.~(\ref{largek}) (and the term with
$Q={0,0,-1}$ in the conformal expansion) explains why close to the
fermionic point the correlation function exhibits beatings, see also
Eq.~(\ref{fff}). At $T=0$, this feature of the correlation function
was noticed and explained by Calabrese and Mintchev \cite{CM}.

%%%%%%%%%%%%%%%%%%%%%%%%%%%%%%%%%%%%%%%%%%%%%%%%%%%%%%%%%%%%%%%%%%%%%%
\section{Conclusions}
%%%%%%%%%%%%%%%%%%%%%%%%%%%%%%%%%%%%%%%%%%%%%%%%%%%%%%%%%%%%%%%%%%%%%%

We have computed rigorously the large-distance asymptotic behavior
of the field-field correlation functions of impenetrable anyons at
finite temperature. In the process, we have also obtained a system
of differential equations which characterize the correlators at any
distance. Our result agrees with the predictions of the conformal
field theory at low temperatures, and describes a transition in the
behavior of the correlators between the impenetrable bosons realized
for the statistical parameter $\kappa=0$, and free fermions at
$\kappa=1$. Variation of the asymptotic behavior of the field-field
correlation function with the statistics parameter $\kappa$
illustrates the role of the exchange statistics in one dimension.
This variation contradicts the intuitive notion that exchange
statistics is irrelevant in 1D systems of impenetrable particles,
because they can not be exchanged in 1D geometry.

The next important step in the analysis of the 1D anyons would be to
extend the results of this work to the description of the time
dependence of the field-field correlators. The resulting time-,
space-, and temperature-dependent correlation functions could be
calculated along the lines of Ref.~\cite{IIKV}, which treated the
case of impenetrable bosons. We will attempt to do this in a future
publication.

%%%%%%%%%%%%%%%%%%%%%%%%%%%%%%%%%%%%%%%%%%%%%%%%%%%%%%%%%%%%%%%%%%%%%%
\appendix
%%%%%%%%%%%%%%%%%%%%%%%%%%%%%%%%%%%%%%%%%%%%%%%%%%%%%%%%%%%%%%%%%%%%%%

%%%%%%%%%%%%%%%%%%%%%%%%%%%%%%%%%%%%%%%%%%%%%%%%%%%%%%%%%%%%%%%%%%%%%%%
\section{Short-Distance Asymptotics}\label{sda}
%%%%%%%%%%%%%%%%%%%%%%%%%%%%%%%%%%%%%%%%%%%%%%%%%%%%%%%%%%%%%%%%%%%%%%%

In this appendix, we obtain the short-distance asymptotics of the
potentials $B_{+-}$ and $B_{++}$ and as a byproduct, compute the
short-distance asymptotics of the field correlator. First, we need
to express the potentials in the form more amenable for
short-distance computations. We start with $B_{+-}$. Multiplying
both sides of the integral equation (\ref{df}) which defines
$f_+(\lam)$ by $\gamma e_-(\lam)$ one gets
\begin{eqnarray*}
\gamma f_+(\lam)\sqrt{\vartheta(\lam)}e^{-i\lam x}&=& \gamma
\vartheta( \lam)+\gamma^2\sqrt{\vartheta(\lam)}e^{-i\lam x}
\inti\sqrt{\vartheta(\lam)} \frac{\sin x(\lam-\mu)}{(\lam- \mu)}
\sqrt{\vartheta(\mu)} f_+(\mu)\ d\mu\, ,\nonumber\\&=& \gamma
\vartheta(\lam)+\gamma\vartheta(\lam)\inti\frac{1-e^{-2i(\lam-\mu)
x}}{2i(\lam-\mu)}\gamma f_+(\mu) \sqrt{\vartheta(\mu)} e^{-i\mu
x}d\mu\, . \end{eqnarray*}
This equation shows that $B_{+-}$ can be written as
\be\label{bpm} B_{+-}(x,\beta,\gamma)=\inti s(\lam) d\lam \, , \ee
where the function $s(\lam)$ solves the following integral equation
\be\label{ds}
s(\lam)-\gamma\vartheta(\lam)\inti\frac{1-e^{-2i(\lam-\mu)x}}{2i(\lam
-\mu)}s(\mu) d\mu=\gamma\vartheta(\lam)\, . \ee
In a similar fashion, we get
\be\label{bpp}
B_{++}(x,\beta,\gamma)=\inti e^{2i\lam x} s(\lam) d\lam\, ,
\ee
where $s(\lam)$ is the solution of the same integral equation (\ref{ds}).

For small $x$, the solution of (\ref{ds}) can be expanded as
\[ s(\lam)\equiv s(\lam,\beta,\gamma)=\sum_{k=0}^\infty s_k(\lam,
\beta, \gamma) x^k\, , \]
where $s_k$ are defined by the following recursion relations
\be \label{recurrence} s_0(\lam)=\gamma\vartheta(\lam)\, ,\;\;\;\;
s_m(\lam) = s_0(\lam) \sum_{k=0}^{m-1}\frac{(2i)^{m-k-1}}{(m-k)!}
\inti(\mu- \lam)^{m-k-1} s_k(\mu)d\mu\,,\ \ m\ge 1\, . \ee
Defining
\[ \beta_l(\beta,\gamma)=\gamma\inti\lam^l\vartheta(\lam)d\lam\, , \]
i.e., $\beta_l \equiv 0$ for odd $l$, and using (\ref{recurrence})
in (\ref{bpm}) and (\ref{bpp}), we obtain the short-distance
asymptotics for the potentials
\begin{eqnarray*}
B_{++}(x,\beta,\gamma)&=&\beta_0+\beta_0^2x+\left(\beta_0^3-2
\beta_2 \right)x^2 +\left(\beta_0^4-\frac{4}{3}\beta_0 \beta_2
\right)x^3+O(x^4)\, ,\\ B_{+-}(x,\beta,\gamma)&=& \beta_0+\beta_0^2x
+\beta_0^3x^2+\left(\beta_0^4-\frac{4}{3}\beta_0\beta_2\right)x^3
+O(x^4)\, . \end{eqnarray*}
These relations give us the short-distance asymptotics of the field
correlator. Combining them with Eqs.~(\ref{derbs}) and
(\ref{fullexp})
\[ g(x,\beta,\gamma)=B_{++}(x,\beta,\gamma)e^{\sigma(x,\beta,
\gamma)}|_{\gamma=(1+e^{i\pi\kappa})/\pi}\, ,\ \
\sigma(x,\beta,\gamma)=-\int_0^x B_{+-}(y,\beta,\gamma) dy\, ,\]
we find first
\[ \sigma(x,\beta,\gamma)=-\beta_0 x-\frac{1}{2}\beta_0^2 x^2
-\frac{1}{3}\beta_0^3 x^3+O(x^4)\, , \]
and then
\[ g(x,\beta,\gamma)=\beta_0\left(1-2\frac{\beta_2}{\beta_0}x^2
+\frac{2}{3}\beta_2 x^3\right)+O(x^4)\, . \]
In the original variables $x=(x_1-x_2)\sqrt{T}/2>0,\ \beta=h/T, \
\lam\rightarrow \lam/\sqrt{T}$, this result gives for the correlator
(\ref{fc})
\be\label{sdasympt}
\la\fad(x_1)\fa(x_2)\ra_T=D\left(1-\frac{E}{2D}(x_1-x_2)^2+\gamma
\frac{\pi E}{6}(x_1-x_2)^3\right)+O\left((x_1-x_2)^4\right)\, , \ \
\ \gamma=(1+e^{i\pi\kappa})/\pi \, , \ee
where
\[ D=\frac{1}{2\pi}\inti\frac{d\lam}{1+e^{(\lam^2-h)/T}}\, ,\ \ \
E=\frac{1}{2\pi}\inti\frac{\lam^2d\lam}{1+e^{(\lam^2-h)/T}}\, , \]
are the particle and the kinetic energy density, respectively.

%%%%%%%%%%%%%%%%%%%%%%%%%%%%%%%%%%%%%%%%%%%%%%%%%%%%%%%%%%%%%%%%%%%%%%
\section{Low Density Expansions}\label{lde}
%%%%%%%%%%%%%%%%%%%%%%%%%%%%%%%%%%%%%%%%%%%%%%%%%%%%%%%%%%%%%%%%%%%%%%

As usual, the low-density limit is reached when the chemical
potential is such that $\beta=h/T \rightarrow-\infty$. In our
rescaled variables, the density of impenetrable anyons is given by
\[ D=\frac{\sqrt{T}}{2\pi}\inti\frac{d\lam}{1+e^{\lam^2-\beta}}\, ,  \]
so that $D\rightarrow 0$ for $\beta\rightarrow -\infty$. In what
follows, it will be convenient to use the variable
\[ \zeta \equiv -e^\beta,\ \ \ \ \zeta\rightarrow 0\ \ \ \mbox{for}
\ \ \beta\rightarrow -\infty\, .
\]
In order to obtain the low-density expansions for the potentials
\[ B_{++}=\sum_{k=1}^\infty b_k(x)\zeta^k\, , \ \ \ B_{+-}=
\sum_{k=1}^\infty c_k(x)\zeta^k\, , \]
we again use Eqs.~(\ref{bpm}) and (\ref{bpp}), and the integral
equation (\ref{ds}) which has the form suitable for interation
expansion in density. In terms of $\zeta$, the Fermi weight can be
represented as
\[ \vartheta(\lam)=-\sum_{k=1}^\infty\zeta^ke^{-k\lam^2} . \]
Expanding also $s(\lam)$:
\[ s(\lam)=\sum_{k=1}^\infty \zeta^k s_k(\lam,x)\, , \]
we obtain from Eq.~(\ref{ds}) the following recursion relations for
the ``coefficients'' $s_k$:
\begin{eqnarray*}
s_1(\lam)&=&-\gamma e^{-\lam^2} , \\ s_k(\lam,x) &=& e^{-
\lam^2}s_{k-1}(\lam,x)-\gamma e^{-\lam^2} \inti \frac{1- e^{-2i(
\lam-\mu)x}}{2i(\lam-\mu)}s_{k-1}(\mu,x) d\mu\, ,\ \ k\ge 2\, .
\end{eqnarray*}
The first terms of the expansions of the potentials obtained from
these recursion relations and Eqs.~(\ref{bpm}) and (\ref{bpp}) are:
\begin{eqnarray}
B_{+-}(x,\zeta,\kappa)&=&-\gamma\sqrt{\pi}\zeta+\left(-\gamma \sqrt{
\frac{\pi}{2}}+\gamma^2\pi\int_0^x e^{-x_1^2}dx_1\right)\zeta^2+
O(\zeta^3)\, ,\nonumber \\ B_{++}(x,\zeta,\kappa)&=&-\gamma
\sqrt{\pi} e^{-x^2} \zeta+ \left(-\gamma\sqrt{\frac{\pi}{2}}
e^{-x^2} +\gamma^2 \pi e^{-x^2}\int_0^x e^{-2x_1^2+2x_1x}dx_1\right)
\zeta^2+ O(\zeta^3)\, . \label{int10} \end{eqnarray}
Similarly to the short-distance expansions, Eq.~(\ref{int10}) gives
\[ \sigma(x,\beta,\gamma)=-\gamma\sqrt{\pi}xe^\beta+O(e^{2\beta})\,
, \]
and the correlator in the rescaled variables:
\[ \frac{\sqrt{T}}{2\pi\gamma}g(x,\beta,\gamma)=\frac{\sqrt{T}}{2
\pi^{1/2}}e^{-x^2}e^\beta+O(e^{2\beta})\, . \]
In the original variables, this result for the correlator is:
\be \la\fad(x_1)\fa(x_2)\ra_T=De^{-T(x_1-x_2)^2/4} . \ee
It is valid as long as we can neglect the $O(e^{2\beta})$ terms,
i.e. for $T(x_1-x_2)^2\ll |h|/T$.

%%%%%%%%%%%%%%%%%%%%%%%%%%%%%%%%%%%%%%%%%%%%%%%%%%%%%%%%%%%%%%%%%%%%%
\section{Solvability of the Matrix Riemann-Hilbert Problem}
\label{smrhp}
%%%%%%%%%%%%%%%%%%%%%%%%%%%%%%%%%%%%%%%%%%%%%%%%%%%%%%%%%%%%%%%%%%%%%

As we have shown in Section \ref{connect}, the matrix RH problem
(\ref{RHP}) is equivalent to the system of nonsingular integral
equations for functions $f_\pm(\lam)$:
\[ f_\pm(\lam)-\gamma\int_{-\infty}^{+\infty}K_T(\lam,\mu)f_\pm
(\mu)d\mu =e_\pm(\lam)\, , \]
with the kernel (\ref{ad1}). This means that the RH problem has a
unique solution whenever this system of integral equations of
Fredholm type has a unique solution. To analyze these equations, we
fix $\beta$ and $\kappa$, leaving the kernel $K_T(\lam,\mu)$ a
function of coordinate $x$. Let $D$ be an open connected subset of
the complex plane and $\cal{L}(\cal{H})$ -- the space of operators
acting on a separable Hilbert space $\cal{H}$. Consider the function
\[ f(x):D\rightarrow\cal{L}(\cal{H})\, , \]
which for each $x$ in $D$ gives the integral operator with kernel
$K_T(\lam,\mu)$. Then for each $x$ in the finite strip
\[ 0<a<\Ree x<b\, , \ \ \Imm x<\epsilon\, , \]
$f(x)$ is analytic operator-valued function. The kernel
$K_T(\lam,\mu)$ also satisfies the estimate
\[ \inti\inti |K_T(\lam,\mu)|^2d\lam\ d\mu<Cb^2 , \]
where $C$ is a constant, which means that $f(x)$ is compact for each
$x\in D$ (see Thm. VI. 23 of \cite{RS}). Under these conditions, we
can apply the analytic Fredholm theorem.
\begin{thm}[Thm VI. 14 of \cite{RS}]
Let $D$ be an open connected subset of $\mathbb{C}$. Let
$f:D\rightarrow\cal{L}(\cal{H})$ be an analytic operator-valued
function such that for each $z\in D$, $f(z)$ is compact. Then,
either
\begin{itemize}
\item[a)] $(I-f(z))^{-1}$ exists for no $z\in D$; or
\item[b)] $(I-f(z))^{-1}$ exists for all $z\in D\backslash S$,
where $S$ is  a discrete subset of $D$ (i.e. a set which has no
limit points in $D$). In this case $(I-f(z))^{-1}$ is meromorphic in
$D$, analytic in $D\backslash S$, the residues at the poles are
finite rank operators, and if $z\in S$, then equation
$f(z)\psi=\psi$ has a nonzero solution in $\cal{H}$.
\end{itemize}
\end{thm}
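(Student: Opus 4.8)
The plan is to prove the dichotomy by a localization argument that reduces the invertibility of $I-f(z)$ to the nonvanishing of a single scalar analytic function, followed by a connectedness argument to globalize. First I would fix an arbitrary point $z_0\in D$ and use compactness of $f(z_0)$ to choose a finite-rank operator $F$ with $\|f(z_0)-F\|<1/2$. By analyticity, hence continuity, of $z\mapsto f(z)$, there is a disk $D_{z_0}=\{|z-z_0|<r\}\subset D$ on which $\|f(z)-F\|<1$, so that $I-(f(z)-F)$ is invertible there with inverse analytic in $z$ by the Neumann series. The key algebraic step is the factorization
\[
I-f(z)=\bigl(I-(f(z)-F)\bigr)\bigl(I-g(z)\bigr),\qquad g(z):=\bigl(I-(f(z)-F)\bigr)^{-1}F,
\]
valid on $D_{z_0}$, in which the first factor is always invertible and $g(z)$ is a finite-rank operator depending analytically on $z$.

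Since $g(z)$ has rank at most $N=\mathrm{rank}\,F$, writing $Fv=\sum_{n=1}^{N}\langle v,\psi_n\rangle\phi_n$ turns the invertibility of $I-g(z)$ into a finite-dimensional question: $I-g(z)$ is invertible on $\mathcal{H}$ if and only if the scalar $d_{z_0}(z):=\det\bigl(\delta_{mn}-\langle(I-(f(z)-F))^{-1}\phi_n,\psi_m\rangle\bigr)_{m,n=1}^{N}$ is nonzero, and $d_{z_0}$ is analytic on $D_{z_0}$. Because the first factor of the factorization is invertible, $(I-f(z))^{-1}$ exists precisely when $d_{z_0}(z)\neq0$, and moreover $\ker(I-f(z))=\ker(I-g(z))$. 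This already yields the final two assertions locally: when $d_{z_0}(z)\neq0$, Cramer's rule expresses $(I-g(z))^{-1}$, and hence $(I-f(z))^{-1}$, as $d_{z_0}(z)^{-1}$ times an analytic identity-plus-finite-rank expression, so the inverse is meromorphic with poles only at the zeros of $d_{z_0}$ and with finite-rank residues; and at a zero of $d_{z_0}$ the finite matrix is singular, producing a nonzero $\psi\in\ker(I-g(z))=\ker(I-f(z))$, i.e. $f(z)\psi=\psi$.

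It remains to globalize the local dichotomy. On each disk $d_{z_0}$ is a scalar analytic function, so by the identity theorem it is either identically zero or has only isolated zeros in $D_{z_0}$. I would then define $U=\{z\in D:\ I-f\text{ is noninvertible on a whole neighborhood of }z\}$, equivalently the set of points near which the local determinant vanishes identically; this set is open by construction. The essential observation is that on the overlap of two patches the two local determinants have the same zero set: comparing the two factorizations shows that $I-g(z)$ and $I-g'(z)$ differ only by multiplication by invertible operators, so both coincide with the intrinsic noninvertibility set of $I-f(z)$. Hence ``the local determinant vanishes identically'' is a patch-independent property. Using the identity theorem once more, any point of $D$ that is a limit of points of $U$ again lies in $U$, so $U$ is also closed; connectedness of $D$ then forces $U=D$ or $U=\emptyset$. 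In the first case $(I-f(z))^{-1}$ exists for no $z$, giving alternative (a); in the second case every local zero set is discrete, their union $S$ is a discrete subset of $D$, and the local formulas patch together to give alternative (b).

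The main obstacle is exactly this globalization, because the local determinants $d_{z_0}$ are defined only up to the arbitrary choice of the approximant $F$ and the radius $r$; the argument must verify that their zero sets agree on overlaps and that ``vanishing identically'' is a genuine open-and-closed condition on $D$. Once the consistency on overlaps is established through the invertible factors relating $I-g(z)$ and $I-g'(z)$, the connectedness of $D$ does the rest, and the finite-rank reduction delivers meromorphy, finite-rank residues, and the eigenvalue statement with no further work.
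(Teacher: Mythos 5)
Your argument is correct, but note that the paper does not prove this theorem at all: it is quoted verbatim as Theorem VI.14 of Reed and Simon \cite{RS} and used as a black box to establish solvability of the matrix Riemann--Hilbert problem. What you have written is essentially the standard Reed--Simon proof --- finite-rank approximation of the compact operator, the factorization $I-f(z)=(I-(f(z)-F))(I-g(z))$ with $g(z)$ finite rank, reduction to a scalar analytic determinant, and the open-and-closed/connectedness argument to globalize the local dichotomy --- so there is nothing in the paper to compare it against beyond the citation, and no gap to report.
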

As a consequence of this theorem, we have to prove that for at least
one point in the strip $D$ the integral equations have a unique
solution. But this is definitely true for small $x$, where the
Liouville-Neumann series is convergent. Thus, we have shown that the
matrix RH problem has a unique solution except for a countable set
of values of $x_n$, which we will denote by $X=\{x_n\}.$

%%%%%%%%%%%%%%%%%%%%%%%%%%%%%%%%%%%%%%%%%%%%%%%%%%%%%%%%%%%%%%%%%%%%%%%
\section{Scalar Riemann-Hilbert Problem}\label{srhp}
%%%%%%%%%%%%%%%%%%%%%%%%%%%%%%%%%%%%%%%%%%%%%%%%%%%%%%%%%%%%%%%%%%%%%%%

This Appendix provides the basic information on the scalar
Riemann-Hilbert problem used in Sec.~\ref{trans}. The general problem of
this type for the semi-plane is formulated as follows \cite{G}. Consider two
functions $g(\lam)$ and $r(\lam)$  defined on the real axis,
with $g(\lam)$ nonvanishing. Both are assumed to satisfy the H\"older
condition: $|g(\lam_1)-g(\lam_2)|<C |\lam_1-\lam_2|^k$, and similarly
for $r(\lam)$, with some power $k$: $0<k\le 1$. One needs to find the
function $\alf(\lam)$, or $\tilde\alf(\lam)$, which is analytic
separately in the upper and lower half-plane, with the boundary
values on the real axis satisfying the conditions:
\be\label{hom} \alf_-(\lam)=\alf_+(\lam)g(\lam)\, ,\ \
\lam\in\mathbb{R}\ \ \mbox{ homogeneous problem, } \ee or
\be\label{inhom}
\tilde\alf_-(\lam)=\tilde\alf_+(\lam)g(\lam)+r(\lam)\, ,\ \
\lam\in\mathbb{R}\ \ \mbox{ inhomogeneous problem. } \ee
For the purposes of this work, we will assume also the normalization
condition $\alf(\infty)=\tilde\alf(\infty)=1$. The considerations
presented below can be extended to the more general case of a
simply-connected closed contour in the complex plane -- see
\cite{G}.

\subsection{The homogeneous problem}

We need to distinguish three cases depending on the index
$\chi(g)=(1/2\pi) \mbox{Var}_{[-\infty,+\infty]} \mbox{arg}\
g(\lam)$ of the function $g(\lam)$. If $\chi=0$, the RH problem with
the normalization condition is uniquely solvable. If the index is
positive, $\chi>0$, the problem has $\chi+1$ linearly independent
solutions, whereas the problem has no solution for $\chi<0$. For
$\chi=0$, which is the situation most important for the present
discussion, the solution of the RH problem (\ref{hom}) is given by
\be \alf(\lam)=\exp\left\{-\frac{1}{2\pi i}\inti\frac{\ln
g(\mu)}{\mu- \lam}d\mu\right\}\, ,\ \ \lam\in\mathbb{C}/
\mathbb{R}\, . \ee
As in the matrix case, it is straightforward to show that the scalar
RH problem (\ref{hom}) is equivalent to the singular integral
equation
\[ \alf_+(\lam)=1+\frac{1}{2\pi i}\inti\frac{\alf_+(\mu)(1-g(\mu) )
}{ \mu-\lam-i0}d\mu\, ,\ \ \lam\in\mathbb{R}\, .\]

\subsection{The inhomogeneous problem}

Similarly to the homogeneous problem, the solution of the inhomogeneous
RH problem (\ref{inhom}) with the normalization condition is unique
for $\chi=0$, which is the situation of interest for the present
discussion. The solution can be obtained from the solution of the
homogeneous problem with the same $g(\lam)$. If $\alf(\lam)$ solves
(\ref{hom}), then  $g(\lam)=\alf_-(\lam)/\alf_+(\lam)$ for
$\lam\in\mathbb{R}$, and (\ref{inhom}) can be written as
\[ \frac{\tilde\alf_+(\lam)}{\alf_+(\lam)}-\frac{\tilde\alf_- (\lam)
}{\alf_-(\lam)}=-\frac{r(\lam)}{\alf_-(\lam)}\, . \]
The functions $\tilde\alf_+(\lam)/\alf_+(\lam)$ and
$\tilde\alf_-(\lam)/\alf_-(\lam)$ are the boundary values of the
function $\tilde\alf(\lam)/\alf(\lam)$ which is analytic in the
complex plane minus the real axis and approaches $1$ at infinity due
to imposed normalization conditions $\tilde\alf(\infty)=
\alf(\infty)=1$. Using the properties of the Cauchy integral, one
obtains from this:
\[ \frac{\tilde\alf(\lam)}{\alf(\lam)}=1-\frac{1}{2\pi i}\inti
\frac{r(\mu)}{\alf_-(\mu)(\mu -\lam)}d\mu\, , \ \ \
\lam\in\mathbb{C} /\mathbb{R}\, . \]
Thus, solution of the inhomogeneous scalar RH problem (\ref{inhom})
is given by
\be\label{solution} \tilde\alf(\lam)=\alf(\lam)\left(1-\frac{1}{2\pi
i}\inti\frac{r(\mu) }{\alf_-(\mu)(\mu-\lam)}d\mu\right) , \ \ \
\lam\in\mathbb{C}/\mathbb{R}\, . \ee
where $\alf(\lam)$ is the solution of the homogeneous problem
(\ref{hom}). The singular integral equation equivalent to the
inhomogeneous RH problem is
\be\label{aimp} \tilde\alf_+(\lam)=1+\frac{1}{2\pi
i}\inti\frac{\tilde\alf_+(\mu)(1-g(\mu))}{\mu-\lam-i0}d\mu
-\frac{1}{2\pi i}\inti\frac{r(\mu)}{\mu-\lam-i0}d\mu \, ,\ \
\lam\in\mathbb{R}\,. \ee
%

%%%%%%%%%%%%%%%%%%%%%%%%%%%%%%%%%%%%%%%%%%%%%%%%%%%%%%%%%%%%%%%%%%%%%%%%
\section{Analysis of $C(\beta,\kappa)$}\label{aoc}
%%%%%%%%%%%%%%%%%%%%%%%%%%%%%%%%%%%%%%%%%%%%%%%%%%%%%%%%%%%%%%%%%%%%%%%%

In this appendix, we study the behavior of the function
$C(\beta,\kappa)$, which enters the asymptotics (\ref{asymptoticp})
of the field correlator, and is defined by Eq.~(\ref{defcbk1}) with
condition (\ref{log}), for large and small $\beta$. We start with
$\beta<0$. In this case,  the expansion of the logarithms in
Eq.~(\ref{defcbk1}) according to $\ln(1-z)=-\sum_{n=1}^\infty z^n/n$
for $|z|<1$, gives:
\begin{eqnarray*}
C(\beta,\kappa)&=&\frac{1}{\pi}\inti\sum_{n=1}^\infty\left(\frac{
(-1)^{n+1} e^{-n(\lam^2+|\beta|)}}{n}+ \frac{e^{i n\pi \kappa}e^{-n
(\lam^2+|\beta|)}}{n}\right)d\lam\, ,\nonumber\\ &=&\frac{1}{\sqrt{
\pi} }\sum_{n=1}^\infty\left(\frac{(-1)^{n+1}e^{-n|\beta|}}{n^{3/2}}+
\frac{e^{i n\pi\kappa}e^{-n|\beta|}}{n^{3/2}}\right)\, .
\end{eqnarray*}
Therefore, the leading terms for large and small $|\beta|$ are,
respectively,
\be
 C(\beta,\kappa)=\frac{e^{-|\beta|}}{\sqrt{\pi}}(1+\cos \pi\kappa)
+i\frac{e^{-|\beta|}}{\sqrt{\pi}}\sin \pi\kappa\, ,\ \ \beta
\rightarrow -\infty \, , \ee
and
\be C(\beta,\kappa)=\frac{1}{\sqrt{\pi}}\sum_{n=1}^\infty\left(\frac{
(-1)^{n+1}+\cos n\pi\kappa}{n^{3/2}}\right) +i\frac{1}{\sqrt{\pi}}
\sum_{n=1}^\infty\frac{\sin n\pi\kappa}{n^{3/2}}\, ,\ \ \beta
\rightarrow 0\, . \label{small} \ee
When $\beta$ is small, its sign is irrelevant and, as we see
explicitly below, the last equation holds both for negative and
positive $\beta$.

For $\beta>0$, we can transform the logarithms in Eq.~(\ref{defcbk1})
so that the same expansion is applicable, and get
\begin{eqnarray}
C(\beta,\kappa)&=&\frac{2}{\pi}\int_0^{\sqrt{\beta}}\left[ -i\pi(
\kappa-1)+\sum_{n=1}^\infty\left(\frac{(-1)^{n+1}e^{n(\lam^2-\beta
)}}{n}+ \frac{e^{-i n\pi\kappa}e^{n(\lam^2-\beta)}}{n}\right) \right]
d\lam \nonumber\\ & &\ \ \ \ +\frac{2}{\pi}\int_{\sqrt{\beta}}^\infty
\sum_{n=1}^\infty \left(\frac{(-1)^{n+1}e^{-n(\lam^2-\beta)}}{n}+
\frac{e^{i n\pi\kappa}e^{-n(\lam^2-\beta)}}{n}\right)d\lam\, .
\label{int23} \end{eqnarray}
When $\beta$ is large, the formulae
\[ e^{-\beta n}\int_0^{\sqrt{\beta}}e^{\lam^2 n}d\lam=\frac{1}{2n
\sqrt{\beta}}+O\left(\frac{1}{\beta^{3/2}}\right) ,\ \ \ e^{\beta
n}\int_{\sqrt{\beta}}^\infty e^{-\lam^2 n}d\lam=\frac{1}{
2n\sqrt{\beta}}+O\left(\frac{1}{\beta^{3/2}}\right) , \]
simplify this expression into
\[ C(\beta,\kappa)=\frac{2}{\pi\sqrt{\beta}}\sum_{n=1}^\infty\left(
\frac{(-1)^{n+1}}{n^2}+\frac{\cos n\pi\kappa}{n^2}\right)
-2i\sqrt{\beta}(\kappa-1)+O\left(\frac{1}{\beta^{3/2}}\right)\, . \]
This expression can be transformed finally using the formulae (0.234)
and (1.443) of \cite{GR}, $\sum_{k=1}^\infty(-1)^{n+1}/n^2=\pi^2/12$,
and $\sum_{k=1}^\infty\cos n\pi\kappa/n^2=\pi^2 B_2(\kappa/2)$, where
$B_2(x)=x^2-x+1/6$ is the second Bernoulli polynomial. This gives
\be C(\beta,\kappa)=\frac{\pi}{\sqrt{\beta}}\left(\frac{\kappa^2}{2}-
\kappa+\frac{1}{2}\right) -2i\sqrt{\beta}(\kappa-1)\, ,\ \ \
(\beta\rightarrow\infty)\, . \ee
Equation (\ref{int23}) also shows that for $\beta \rightarrow +0$,
$C(\beta,\kappa)$ is given by the same Eq.~(\ref{small}) as for
$\beta \rightarrow -0$.

%%%%%%%%%%%%%%%%%%%%%%%%%%%%%%%%%%%%%%%%%%%%%%%%%%%%%%%%%%%%%%%%%%%%%%%%%%
%Bibliography                                                                      %
%%%%%%%%%%%%%%%%%%%%%%%%%%%%%%%%%%%%%%%%%%%%%%%%%%%%%%%%%%%%%%%%%%%%%%%%%%

\end{document}